\newtheorem{theorem}{Theorem}[section]
\newtheorem{corollary}{Corollary}[theorem]
\newtheorem{lemma}{Lemma}[theorem]
\newtheorem{remark}{Remark}[theorem]
\begin{document}
	
	\title{Integrated User Scheduling and Beam Steering in Over-the-air Federated Learning for Mobile IoT}
	
	\author{Shengheng~Liu}
	\authornote{Corresponding author}
	\orcid{0000-0001-6579-9798}
	\affiliation{%
		\institution{School of Information Science and Engineering, Southeast University}
		\city{Nanjing}
		\postcode{210096}
		\country{China}}
	\affiliation{%
		\institution{Purple Mountain Laboratories}
		\city{Nanjing}
		\postcode{211111}
		\country{China}}
	\email{s.liu@seu.edu.cn}
	
	\author{Ningning~Fu}
	\orcid{0009-0009-7407-6597}
	\affiliation{%
		\institution{School of Information Science and Engineering, Southeast University}
		\city{Nanjing}
		\postcode{210096}
		\country{China}}
	\email{funingning@seu.edu.cn}
	
	\author{Zhonghao~Zhang}
	\affiliation{%
		\institution{China Unicom Research Institute}
		\city{Beijing}
		\postcode{100048}
		\country{China}}
     \email{zhangzh306@chinaunicom.cn}

	\author{Yongming~Huang}
	\orcid{0000-0003-3616-4616}
	\affiliation{%
		\institution{School of Information Science and Engineering, Southeast University}
		\city{Nanjing}
		\postcode{210096}
		\country{China}}
	\affiliation{%
		\institution{Purple Mountain Laboratories}
		\city{Nanjing}
		\postcode{211111}
		\country{China}}
\email{huangym@seu.edu.cn}
	
	\author{Tony~Q.~S.~Quek}
	\orcid{0000-0002-4037-3149}
	\affiliation{%
		\institution{Information System Technology and Design Pillar, Singapore University of Technology and Design}
		\postcode{487372}
		\country{Singapore}}
\email{tonyquek@sutd.edu.sg}
	
	\renewcommand{\shortauthors}{Liu et al.}
	
\begin{abstract}
		The rising popularity of Internet of things (IoT) has spurred technological advancements in mobile internet and interconnected systems. While offering flexible connectivity and intelligent applications across various domains, IoT service providers must gather vast amounts of sensitive data from users, which nonetheless concomitantly raises concerns about privacy breaches. Federated learning (FL) has emerged as a promising decentralized training paradigm to tackle this challenge. This work focuses on enhancing the aggregation efficiency of distributed local models by introducing over-the-air computation into the FL framework. Due to radio resource scarcity in large-scale networks, only a subset of users can participate in each training round. This highlights the need for effective user scheduling and model transmission strategies to optimize communication efficiency and inference accuracy. To address this, we propose an integrated approach to user scheduling and receive beam steering, subject to constraints on the number of selected users and transmit power. Leveraging the difference-of-convex technique, we decompose the primal non-convex optimization problem into two sub-problems, yielding an iterative solution. While effective, the computational load of the iterative method hampers its practical implementation. To overcome this, we further propose a low-complexity user scheduling policy based on characteristic analysis of the wireless channel to directly determine the user subset without iteration. Extensive experiments validate the superiority of the proposed method in terms of aggregation error and learning performance over existing approaches.
\end{abstract}
	
\begin{CCSXML}
		<ccs2012>
		<concept>
		<concept_id>10003033.10003106.10003113</concept_id>
		<concept_desc>Networks~Mobile networks</concept_desc>
		<concept_significance>300</concept_significance>
		</concept>
		<concept>
		<concept_id>10003752.10003809.10010172</concept_id>
		<concept_desc>Theory of computation~Distributed algorithms</concept_desc>
		<concept_significance>500</concept_significance>
		</concept>
		<concept>
		<concept_id>10010147.10010178.10010219</concept_id>
		<concept_desc>Computing methodologies~Distributed artificial intelligence</concept_desc>
		<concept_significance>500</concept_significance>
		</concept>
		</ccs2012>
\end{CCSXML}
	
	\ccsdesc[300]{Networks~Mobile networks}
	\ccsdesc[500]{Theory of computation~Distributed algorithms}
	\ccsdesc[500]{Computing methodologies~Distributed artificial intelligence}
	
	\keywords{Federated learning (FL), over-the-air computation (AirComp), user scheduling, receiving beam steering design, difference of convex (DC).}
	
	\received{2 February 2024}
	
\maketitle
	
\section{Introduction}
	
	Once deemed unfeasible, the \textbf{Internet of things (IoT)}, a network of interconnected devices that sense, communicate, and process data, has significantly expanded connectivity among individuals and quietly revolutionized various facets of our lives, ranging from smart home, entertainment, healthcare to manufacturing, grids, and more \cite{bian2022machine, can2021privacy}. Fully harnessing its potential requires IoT service providers to train large models using massive quantities of user data for diverse purposes \cite{wang2020preserving}. Nevertheless, this reliance on large volumes of sensitive data poses two major challenges: (i) reluctance among users to share private data due to privacy and security concerns \cite{wei2024demystifying}, and (ii) the likelihood of high latency resulting from the transmission of substantial raw data over limited bandwidth between users and the aggregator \cite{xu2021optimized, Xia2021Wireless}. \textbf{Federated learning (FL)} is usually recommended as a compelling solution \cite{Zhang2024Privacy} to overcome the challenges by enabling collaborative model training while keeping user data local \cite{Chen2024Minimax}. A typical FL protocol generally consists of two iterative stages \cite{Nguyen2023Preserving}: Users download a trainable model from the aggregator, update it with their data, and then transmit the renewed models back to the aggregator for aggregation and further refinement. Despite containing fewer parameters than raw data, the constant iteration and frequent transmissions of a shared model between users and the aggregator still results in considerable communication overhead \cite{Zhao2023Tensor}. Hence, improving the transmission efficiency of FL schemes is essential for optimizing IoT operations \cite{Hamer2020Fedboost}.
	
	Efficient communication within FL systems plays a critical role in reducing the time budget and accelerating the convergence rate of the global model. One straightforward approach to achieve this is by minimizing the amount of transmitted data, which can be achieved through techniques such as gradient sparsification and linear projection \cite{9042352}. Additionally, decreasing the number of communication rounds between users and the aggregator also proves beneficial \cite{Sebastian2019Local}. Different from these generally applicable strategies, specific enhancements are tailored for certain data transmission modes. Conventional multiple-access schemes, such as \textbf{orthogonal frequency-division multiple access (OFDM)} and \textbf{non-orthogonal multiple access (NOMA)}, operate based on the separated-communication-and-computation principle. In an effort to improve the convergence performance, joint bandwidth allocation and user scheduling polices have been designed \cite{fast_convergence, chen2020joint}.
	Similarly, a multi-armed bandit-based user scheduling problem has been formulated, with solutions derived using the \textbf{upper confidence bound (UCB)} algorithm \cite{xia2020multi}. Despite these efforts, traditional schemes sometimes struggle to meet stringent latency requirements, especially in scenarios characterized by high user density and limited bandwidth \cite{9520774}, primarily because the aggregator must decode digital signals before processing.
	In response to these challenges, an innovative wireless aggregation principle of \textbf{over-the-air computation (AirComp)}, has received immediate attention \cite{Sahin2023OTA}.
	By exploiting the waveform superposition property, AirComp enables simultaneous access and dramatically reduces communication latency \cite{Qiao2023Communication}.
	This technique has led to the proposal of a low-latency, multi-access scheme for edge learning \cite{Broadband}.
	Given that most commercially off-the-shelf wireless devices are equipped with digital modulation chips, efforts are directed towards designing customized communication techniques for implementing AirComp without the need for transceiver upgrades \cite{9322334}. This approach is expected to facilitate the integration of AirComp into present infrastructures, which can potentially boost the efficiency of practical FL systems.
	
	Building upon the existing research and motivated by aforementioned challenges, this work investigates the problem of integrated user scheduling and beam steering in over-the-air federated learning for \textbf{distributed intelligence (DI)}. We aim to explore the optimization potential of beam steering vectors while maintaining a reasonable computational complexity. Given the prevalence of low \textbf{signal-to-noise (SNR)} scenarios in mobile IoT \cite{3gpp_tr45820}, we distinguish our work from prior studies by prioritizing the minimization of \textbf{mean square error (MSE)} over the maximization of user count.
		A typical use case involves applying FL for road condition monitoring in vehicular networks within intelligent transportation systems.  This scenario is characterized by blockages and reflections from  buildings and tunnels, as well as highly unstable communication links due to the high speeds of vehicles.
	In such high-noise context, system performance is chiefly constrained by transmission errors.
	The reduction of the MSE between the received and transmitted signals leads to an improved SNR ratio at the receiver, thereby greatly enhancing the reliability.
	On the other hand, our FL system supports reduced transmission power in a given noise level, which contributes to lower energy consumption of terminal devices.
	The underlying combinatorial optimization problem is decoupled into two sub-problems, namely user scheduling and receive beam steering optimization. We utilize the \textbf{difference of convex (DC)} representation to determine the optimal beam steering vector based on a subset of selected users. Conversely, knowledge of the beam steering vector facilitates the optimization of user scheduling.
	To this end, we propose an iterative method that alternates between these two phases until convergence is attained.
	While effective in minimizing MSE, this iterative method is computationally expensive.
	To mitigate this issue, we find that users with channels characterized by smaller maximum angles and larger modulus are preferable for scheduling.
	Drawing upon this insight, we propose a channel-based low-complexity user scheduling policy, which allows for direct determination of the user subset, followed by receive beam steering vector optimization without iteration.
	The technical contributions of this work are threefold.
\begin{itemize}
		\item We formulate an integrated user scheduling and beam steering problem specifically designed for low SNR mobile IoT scenarios in a large-scale AirComp-empowered FL system, which simultaneously achieves improved communication efficacy and learning performance.
		\item We devise a generalized sub-optimal solution, termed the iterative aggregate beam steering and user scheduling method by using the DC technique. This method involves alternately optimizing user scheduling and adjusting beam steering vectors until stability is arrived within the selected user subset.
		\item We further reduce the computational complexity by developing a policy that exploits prior knowledge of users' channel characteristics, which is rigorously proved to yield a superior objective value.
\end{itemize}
	
	The rest of this paper is organized as follows.
	Our system model and problem formulation is provided in section \uppercase\expandafter{\romannumeral2}.
	In Section \uppercase\expandafter{\romannumeral3}, we analyze the joint optimization problem in detail and thus summarize an iterative aggregate beam steering and user scheduling method.
	Subsequently, we streamline the user scheduling algorithm and develop a low-complexity policy in section \uppercase\expandafter{\romannumeral4}.
	Following this, experimental results are presented in Section \uppercase\expandafter{\romannumeral5} to validate our analysis and showcase the exceptional performance of the proposed methods.
	Finally, conclusion is drawn in Section \uppercase\expandafter{\romannumeral6}.
	Main notations and definitions are listed in Table. \ref{tab:notation}.

\begin{table}
		\centering
		\caption{Main Notations and Definitions.}
		\resizebox{1.0\columnwidth}{!}{
		\begin{tabular}{cc|cc}
			\toprule
			Notation & Definition & Notation & Definition \\
			\midrule
			$x$ & A scalar quantity & $\mathcal{K}$, $K$ & The total user set and the total number of the users \\
			$\mathbf{x}$ & A column vector & $N_r$ & The number of antennas in the aggregator \\
			$\mathbf{X}$ &A matrix & $\mathcal{D}_k$ & The local dataset of user $k$ \\
			$\mathbf{x}[d]$ & The $d$-th element of $\mathbf{x}$ & $\mathcal{D}$ & The global dataset \\
			$(\cdot)^\mathsf{H}$& Conjugate transpose & $\mu$ & The learning rate or step size for FL  \\
			$(\cdot)^*$& The optimal value & $i$ & the iteration index of FL\\
			$\mathbb{E}(\cdot)$&Statistical expectation & $\mathbf{w}$ & The model weight parametcers\\
			$\mathcal{O}(\cdot)$& Computational complexity & $\mathbf{s}_k$ & The transmit symbol vector of user $k$ \\
			$\mu$& Mean value & $b_k$ & The transmit coefficient of user $k$ \\
			$\sigma$&Variance value & $\mathbf{n}$ & The additive white Gaussian noise vector\\
			$\mathcal{CN}(\mu, \sigma)$& Complex Gaussian distribution &$\eta$ & The scale factor of the aggregator\\
			$\operatorname{Real}(\cdot)$& Real part extraction & $\mathbf{g}_k$ & The gradients of the loss on local dataset $\mathcal{D}_k$ \\
			$\nabla$& Gradient calculation & $\mathbf{h}_k$ & The channel vector between user $k$ and the aggregator\\
			$\Vert\cdot \Vert$&Modulus of a vector & $\mathbf{m}$ & The receive beam steering vector of the aggregator \\
			$\operatorname{Tr(\cdot)}$& Trace of a matrix&$\mathcal{S}$, $S$ & The selected user set and the number of selected users \\
			\bottomrule
\end{tabular}}
\label{tab:notation}
\end{table}
	
\section{Related Work}
	
In this section, we provide a concise overview of related research on FL and AirComp within the context of DI. This literature review also covers advanced approaches to mitigate the aggregation error.
	
\subsection{Separated-Communication-and-Computation Paradigm for FL}

In the initial stage, FL systems predominantly employed the separated communication and computation paradigm in their aggregation processes.
Several strategies have been developed to improve the transmission efficiency under this paradigm. One common approach focuses on reducing the number of communication rounds and compressing data to lower the transmission load \cite{Sebastian2019Local, 9042352, wu2021fedscr, haddadpour2021federated, shah2023model}. Alternatively, user scheduling has been explored to minimize the number of participants in each round, thereby reducing overall communication costs. For instance, a multi-armed bandit-based user selection strategy was proposed in \cite{xia2020multi} and solved using the UCB algorithm. Similarly, the Eiffel algorithm was designed for resource-constrained environments to improve scheduling efficiency and fairness while minimizing communication throughput \cite{sultana2022eiffel}.

Joint optimization of user scheduling and bandwidth allocation has also shown promise in improving both transmission efficiency and convergence performance \cite{chen2020joint}. Zhou et al. \cite{zhou2022communication} introduced Overlap-FedAvg, an innovative framework that parallels model training and communication by integrating hierarchical computing, data compensation, and Nesterov accelerated gradients. Tailored solutions for specific domains, such as the FedCPF approach for vehicular FL, have further improved communication efficiency through customized local training strategies, partial client participation, and flexible aggregation policies \cite{liu2022fedcpf}. However, these methods requires aggregators to decode signals after reception, which impose significant constraints on the transmission efficiency. To further enhance aggregation efficiency in FL, AirComp is introduced as a novel paradigm for concurrent communication and computation.

\subsection{AirComp Paradigm for FL}

AirComp is a wireless communication technique that exploits the waveform superposition property of wireless channels to conduct computations concurrently during transmission \cite{Cao2024Overview}. Several works have made significant progress in the AirComp-empowered FL framework. For example, a dynamic device scheduling mechanism was proposed, which selects qualified edge devices to transmit their local models with a proper power control policy for model training via AirComp \cite{Du2023Gradient}. Furthermore, Wang et al. \cite{Wang2024Joint} presented OMUAA, an efficient algorithm for adaptive updating of local and global models based on the time-varying communication environment.
	
In the presence of fading channels and noise, AirComp inevitably introduces aggregation errors, which are quantified by the MSE \cite{Zhu2021Over} metric. Such error causes deviations in received parameters from their true values, resulting in a notable drop in prediction accuracy \cite{Yang2020Federated}. This issue is critical for numerous DI systems that demand high performance in terms of accuracy metric. For example, an individual identification application based on motion signature needs to process millions of data points collected through VR technology within a few minutes and achieve a recognition accuracy greater than $95\%$ \cite{Kids}. To address these challenges, Henrik et al. \cite{Hell2023Federated} examined the impact of the aggregation error on the convergence of FL and proposed a retransmission strategy to enhance FL accuracy over resource-constrained wireless networks. Amiri et al. \cite{amiri2020federated} demonstrated that a well-designed transmit power of each user can effectively reduce MSE. Moreover, optimizing the denoising factor alongside transmit power has been shown to yield improved outcomes \cite{9382094}. Xu et al. \cite{Xu2021Learning} generalized a joint policy for scaling factor and receive beamforming vector design by employing a dynamic learning rate. Additionally, a time-varying precoding and scaling scheme was proposed in \cite{9322580}, which facilitates aggregation and progressively mitigates the impact of noise.
	
Accelerating the convergence rate by involving more users is feasible if MSE is maintained at a moderate level \cite{Li2023Dynamic}. Accordingly, recent research has focused on mitigating aggregation errors via user scheduling and beam steering, which shows significant advancements. To name a few, Yang et al. \cite{Yang2020Via} formulated a joint user scheduling and beam steering vector design task with the objective of maximizing the number of users while satisfying MSE constraints. On this basis, improvements were made by utilizing convex constraints instead of the previous non-convex quadratic constraint in deriving the solution \cite{kim2023}. With a focus on decreasing computational complexity, Xu et al. \cite{Xu2024Random} introduced a random aggregate beamforming scheme that generates the beamforming vector through random sampling on the complex sphere. However, this scheme does not fully exploit the optimization potential inherent in the beamforming vector. Nevertheless, the existing literature primarily addresses high-SNR scenarios, seeking to maximize the user count within a fixed MSE constraint to capitalize on user diversity and improve system performance. In the context of low-SNR conditions, excessive aggregation error is the main performance bottleneck, which merits in-depth analysis.

\section{System Model and Problem Formulation}
\subsection{System Model}
	
Consider an FL system that consists of $K$ single-antenna users and an aggregator with $N_r$ antennas. The channel between user $k$ and the aggregator is assumed to be identically complex Gaussian distributed with unit power, i.e., ${\mathbf{h}}_k\sim\mathcal{CN}\left(0,\mathbf{I}\right)$.
	Each user $k$ possesses the local dataset $\mathcal{D}_{k}$, which constitutes the global dataset $\mathcal{D}=\cup_{k\in\mathcal{K}}\mathcal{D}_{k}$.
	Typically, the objective function of an FL task is to find the optimal model ${\mathbf{w}}^{o}$ that minimizes the loss function $P\left(\mathbf{w};\mathcal{D}\right)$, where $\mathbf{w}\in\mathbb{R}^{D}$ is the model weight parameters with the dimension of $D$, $P\left(\mathbf{w};\mathcal{D}\right)$ is the loss value based on $\mathcal{D}$, i.e., $P\left(\mathbf{w}\right)\triangleq\frac{1}{\left|\mathcal{D}\right|}\sum_{n=1}^{\left|\mathcal{D}\right|}Q\left(\mathbf{w};\mathbf{x}_{n},y_{n}\right)$ with $Q$ the loss function, $\left(\mathbf{x}_{n},y_{n}\right)$ the data sample.
	
	Although the centralized gradient decent method can be applied to update the model $\mathbf{w}$ in order to obtain $\mathbf{w}^{o}$, it becomes impractical as the local data set $\mathcal{D}_{k}$ at each user cannot be accessed by others out of the privacy and latency concerns.
	Instead, as shown in Fig.~\ref{fig1}, each user $k$ updates its model based on the local dataset $\mathcal{D}_{k}$, and transmits its model to the aggregator, collaboratively training a shared model.
	
	Then, the global model is updated by averaging the local models from the participating users.
	The local models $\mathbf{w}_{k},k\in\mathcal{K}$ and the global model $\mathbf{w}$ are respectively updated by
	\[
	\mathbf{w}_{k}^{i+1}=\mathbf{w}^{i}-\mu\mathbf{g}_{k}\left(\mathbf{w}^{i}\right),
	\]
	
\begin{equation}
		\mathbf{w}^{i+1}=\frac{1}{K}\sum_{k=1}^{K}\mathbf{w}_{k}^{i+1},\label{eq:update}
\end{equation}
	where superposition $i$ is the iteration index, $\mu$ is the learning rate or step size, $\mathbf{g}_{k}(\mathbf{w}^{i})=\nabla_{\mathbf{w}^{i}}P_{k}\left(\mathbf{w}^{i}\right)$
	with $P_{k}\left(\mathbf{w}^{i}\right)=\frac{1}{\left|\mathcal{D}_{k}\right|}\sum_{n=1}^{\left|\mathcal{D}_{k}\right|}Q\left(\mathbf{w}^{i};\mathbf{x}_{n},y_{n}\right)$
	is the gradients of the loss on local dataset $\mathcal{D}_{k}$ with respect to $\mathbf{w}^{i}$.
	
\begin{figure}[h]
		\centering
\includegraphics[scale=0.66]{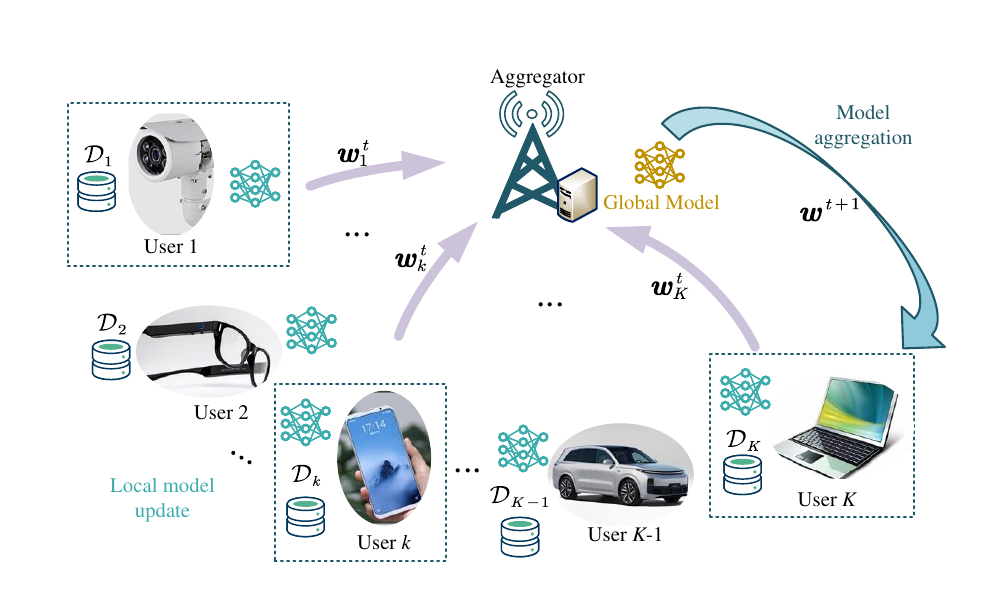}
		\caption{Generic FL architecture for mobile IoT.}
\label{fig1}
\end{figure}
	
	In the considered model, AirComp is used for the model aggregation.
	To be specific, the transmit symbol vector is customized as $\mathbf{s}_{k}^{i+1}\triangleq\mathbf{w}_{k}^{i+1}$,
	which is assumed to be normalized with unit variance, i.e., $\mathbb{E}\left[\mathbf{s}_{k}^{i+1}\left(\mathbf{s}_{k}^{i+1}\right)^{\mathsf{H}}\right]=\mathbf{I}$.
	Then, these symbols are modulated in an analog manner and precoded by the transmit coefficients, after which these signals are superimposed over the air.
	Finally, the received signals at the aggregator are combined by the aggregate beam steering vector $\mathbf{m}$ and the resulted signal is further amplified by a factor $\eta$.
	It is further assumed that the channel is block fading and remained constant during the process of model transmission.
	For notational convenience, the $d$-th element of $\mathbf{s}_{k}$, $\mathbf{w}^{i}$ and $\mathbf{g}_{k}$, i.e., $\mathbf{s}_{k}[d]$, $\mathbf{w}^{i}\left[d\right]$ and $\mathbf{g}_{k}[d]$, are simplified as $s_{k}$, $w^{i}$ and $g_{k}$.
	
	The limited wireless resources deter the participation of all users in the update of the global model, thereby only a subset of users are selected.
	By denoting the selected user set as $\mathcal{S}$, the desired signal based on \eqref{eq:update} is written as:
	
\begin{equation}
		y_{\mathrm{des}}=\sum_{k\in\mathcal{S}}w_{k}^{i+1}=\sum_{k\in\mathcal{S}}s_{k}.
\end{equation}
	Due to the fading and noisy wireless channels, the actual received signal at the aggregator is given by:
	
	\begin{equation}
		y=\sqrt{\eta}\left(\sum_{k\in\mathcal{S}}\mathbf{m}^{\mathsf{H}}\mathbf{h}_{k}b_{k}s_{k}+\mathbf{m}^{\mathsf{H}}\mathbf{n}\right),\label{eq:aggregat_sig}
	\end{equation}
	where $\mathbf{h}_{k}$ is the channel vector from user $k$ to the aggregator, $b_{k}$ is the transmit coefficient at user $k$, $\mathbf{m}$ is the aggregate beam steering vector, $\mathbf{n}$ is the additive white Gaussian noise vector with independent identically distributed $\mathcal{CN}\left(0,\sigma^{2}\right)$.
	
	Thus, the aggregate error via AirComp is expressed as
	
	\begin{equation}
		e\triangleq y_{\mathrm{des}}-y=\sum_{k\in\mathcal{S}}\left(1-\sqrt{\eta}\mathbf{m}^{\mathsf{H}}\mathbf{h}_{k}b_{k}\right)s_{k}-\sqrt{\eta}\mathbf{m}^{\mathsf{H}}\mathbf{n},\label{eq:error}
	\end{equation}
	which is measured by MSE, i.e., $\mathrm{MSE}=\mathbb{E}\left(\left\Vert e\right\Vert ^{2}\right)$,
	including the channel-related and the noise-related components.
	Reducing the aggregate error is critical for the performance of FL task, as a large error would lead to the divergence of the training loss and degrade the classification/regression performance.
	To minimize the error, we can eliminate the error from the fading, while mitigating the error caused by the noise. In this case, the condition and the MSE are respectively written as
	
	\[
	1-\sqrt{\eta}\mathbf{m}^{\mathsf{H}}\mathbf{h}_{k}b_{k}=0,k\in\mathcal{S},
	\]
	\[
	\mathrm{MSE}=\mathbb{E}\left(\left\Vert e\right\Vert ^{2}\right)=\eta\left\Vert \mathbf{m}\right\Vert ^{2}\sigma^{2}.
	\]

\subsection{Problem Formulation}
	
\subsubsection{Minimizing the MSE}

	As discussed in the previous subsection, the MSE needs to be minimized in order to maintain the performance of training and inference.
	The optimization variables include the scaling factor $\eta$, transmitting coefficient \textbf{$b_{k}$}, user subset $\mathcal{S}$, and aggregate beam steering vector $\mathbf{m}$, which are independent from the noise vector $\mathbf{n}$.
	Mathematically, the MSE minimization problem with transmit power constraints can be formulated as
\begin{subequations}	
		\begin{align}
			\min_{\mathbf{m},\eta,b_{k},\mathcal{S\subseteq K}}\quad & \left\Vert \mathbf{m}\right\Vert ^{2}\eta\label{eq:pro1_1-1}\\
			\mathrm{s.t.\quad} & \sqrt{\eta}\mathbf{m}^{\mathsf{H}}\mathbf{h}_{k}b_{k}=1,k\in\mathcal{S}\label{eq:pro1-2-1}\\
			& \left|\mathcal{S}\right|=S\label{eq:pro1_3-1}\\
			& \left\Vert b_{k}\right\Vert ^{2}\leq P,k\in\mathcal{S}, \label{eq:pro1_4-1}
		\end{align}
		\label{eq:SIMO}
\end{subequations}
	where inequality constraint \eqref{eq:pro1_4-1} is the transmit power constraint, constraint \eqref{eq:pro1_3-1} is the fixed number of the selected users. Following \cite{Yang2020Via}, the optimal transmit coefficient $b_{k}$ is designed as $b_{k}=\frac{\mathbf{h}_{k}^{\mathsf{H}}\mathbf{m}}{\sqrt{\eta}\left\Vert \mathbf{m}^{\mathsf{H}}\mathbf{h}_{k}\right\Vert ^{2}}.$
	Substituting $b_{k}$ back to constraint \eqref{eq:pro1_4-1} yields $\eta\geq\frac{1}{P\left\Vert \mathbf{m}^{\mathsf{H}}\mathbf{h}_{k}\right\Vert ^{2}},k\in\mathcal{S}$ for each user $k$, which implies
	
	\begin{equation}
		\eta=\underset{k}{\max}\frac{1}{P\left\Vert \mathbf{m}^{\mathsf{H}}\mathbf{h}_{k}\right\Vert ^{2}},k\in\mathcal{S}.\label{eq:yeta}
	\end{equation}
	Substituting \eqref{eq:yeta} into problem \eqref{eq:SIMO} yields
	
\begin{align}
		\min_{\mathbf{m},\mathcal{S\subseteq K}}\quad & \underset{k\in\mathcal{S}}{\max}\frac{\left\Vert \mathbf{m}\right\Vert ^{2}}{P\left\Vert \mathbf{m}^{\mathsf{H}}\mathbf{h}_{k}\right\Vert ^{2}}\quad\mathrm{s.t.}\;\mathrm{\mathrm{\eqref{eq:pro1_3-1}}}.\label{eq:SIMO1}
\end{align}
	
	For any $\mathbf{m}$, it can be written as the multiplication of its norm and the unit direction vector, i.e., $\mathbf{m}=\left\Vert \mathbf{m}\right\Vert \frac{\mathbf{m}}{\left\Vert \mathbf{m}\right\Vert }.$
	Denote $\tilde{\mathbf{m}}=\frac{\mathbf{m}}{\parallel\mathbf{m}\parallel}$,
	the objective function of problem \eqref{eq:SIMO1} satisfies $\underset{k\in\mathcal{S}}{\max}\frac{\left\Vert \mathbf{m}\right\Vert ^{2}}{P\left\Vert \mathbf{m}^{\mathsf{H}}\mathbf{h}_{k}\right\Vert ^{2}}=\underset{k\in\mathcal{S}}{\max}\frac{\left\Vert \tilde{\mathbf{m}}\right\Vert ^{2}}{P\left\Vert \tilde{\mathbf{m}}^{\mathsf{H}}\mathbf{h}_{k}\right\Vert ^{2}}=\underset{k\in\mathcal{S}}{\max}\frac{1}{P\left\Vert \tilde{\mathbf{m}}^{\mathsf{H}}\mathbf{h}_{k}\right\Vert ^{2}},$
	where $\left\Vert \tilde{\mathbf{m}}\right\Vert =1$. Thereby, problem \eqref{eq:SIMO1} is equivalent to
	
\begin{subequations}
		
		\begin{align}
			\min_{\mathbf{m},\mathcal{S\subseteq K}}\quad & \underset{k\in\mathcal{S}}{\max}\frac{1}{P\left\Vert \mathbf{m}^{\mathsf{H}}\mathbf{h}_{k}\right\Vert ^{2}}\label{eq:SIMO2-1-4}\\
			\mathrm{s.t.}\mathrm{\quad} & \left\Vert \mathbf{m}\right\Vert =1\label{eq:SIMO2-2}\\
			& \mathrm{\mathrm{\eqref{eq:pro1_3-1}}}.\nonumber
		\end{align}
		\label{eq:SIMO1_1}
\end{subequations}
	
	Due to the fact that $\max{\frac{1}{P\left\Vert\mathbf{m}^{\mathsf{H}}\mathbf{h}_{k}\right\Vert^{2}}} = \min{P\left\Vert\mathbf{m}^{\mathsf{H}}\mathbf{h}_{k}\right\Vert^{2}} $, problem \eqref{eq:SIMO1_1} is
	further equivalent to
	
	\begin{subequations}
		
		\begin{align}
			\max_{\mathbf{m},\mathcal{S\subseteq K}}\quad & \underset{k\in\mathcal{S}}{\min}P\left\Vert \mathbf{m}^{\mathsf{H}}\mathbf{h}_{k}\right\Vert ^{2}\label{eq:SIMO2_obj}\\
			\mathrm{s.t.}\mathrm{\quad} &
			 \mathrm{\mathrm{\eqref{eq:pro1_3-1}}},\eqref{eq:SIMO2-2}.\nonumber
		\end{align}
		\label{eq:SIMO2}
	\end{subequations}
	
	Obviously, the above problem \eqref{eq:SIMO2} is an NP-hard combinatorial optimization problems. Its hardness is proved in Appendix \ref{hardness}. A straight-forward method would be to search the user subset in a brute-force manner. However, the size of the search space, i.e., $\tbinom{K}{S}$, becomes prohibitively large when $K$ is large, which makes it intractable. Thus, the user scheduling optimization problem aiming at the minimization of MSE calls for a new method.

\section{Iterative User Scheduling and Aggregate Beam Steering}
		
In this section, we propose an iterative sub-optimal method to solve the combinatorial optimization problem, by alternately solving user scheduling solution $\mathcal{S}$ and the aggregate beamforming vector $\mathbf{m}$.
	Before presenting the solution-obtaining method, we first give the following lemma about the optimal solution.
	\begin{lemma}
		\label{thm:infinity} If $\mathbf{m}^{*}$ is the optimal solution
		of problem \eqref{eq:SIMO2}, $\mathbf{m}^{*}{\mathrm e}^{{\mathrm j}\theta},\forall\theta\in\mathbb{R}$
		is also the optimal solution, which indicates that there are infinite optimal solutions for $\mathbf{m}$.
	\end{lemma}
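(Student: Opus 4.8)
The plan is to exploit the fact that the entire problem \eqref{eq:SIMO2} depends on $\mathbf{m}$ only through phase-invariant quantities, so rotating an optimal $\mathbf{m}^{*}$ by any global phase leaves both feasibility and objective value untouched. I would fix the optimal pair $(\mathbf{m}^{*},\mathcal{S}^{*})$ attaining the maximum, and then simply verify that $\mathbf{m}^{*}{\mathrm e}^{{\mathrm j}\theta}$ is again an optimal $\mathbf{m}$ for the same selected set $\mathcal{S}^{*}$.

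First I would check feasibility. The only constraint that involves $\mathbf{m}$ is the unit-norm requirement \eqref{eq:SIMO2-2}, and since $\left|{\mathrm e}^{{\mathrm j}\theta}\right|=1$ we have $\left\Vert\mathbf{m}^{*}{\mathrm e}^{{\mathrm j}\theta}\right\Vert=\left|{\mathrm e}^{{\mathrm j}\theta}\right|\cdot\left\Vert\mathbf{m}^{*}\right\Vert=1$, so the rotated vector stays feasible for every $\theta\in\mathbb{R}$. Next I would check that the objective is unchanged. Substituting into the inner product gives $\left(\mathbf{m}^{*}{\mathrm e}^{{\mathrm j}\theta}\right)^{\mathsf{H}}\mathbf{h}_{k}={\mathrm e}^{-{\mathrm j}\theta}\left(\mathbf{m}^{*}\right)^{\mathsf{H}}\mathbf{h}_{k}$, and taking the squared modulus absorbs the phase factor because $\left|{\mathrm e}^{-{\mathrm j}\theta}\right|^{2}=1$, yielding $\left\Vert\left(\mathbf{m}^{*}{\mathrm e}^{{\mathrm j}\theta}\right)^{\mathsf{H}}\mathbf{h}_{k}\right\Vert^{2}=\left\Vert\left(\mathbf{m}^{*}\right)^{\mathsf{H}}\mathbf{h}_{k}\right\Vert^{2}$ for every $k$. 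Since this equality holds per user, the inner minimum over $k\in\mathcal{S}^{*}$ and the scaling by $P$ are identical, so the objective value at $\mathbf{m}^{*}{\mathrm e}^{{\mathrm j}\theta}$ equals that at $\mathbf{m}^{*}$.

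Combining the two checks, $\mathbf{m}^{*}{\mathrm e}^{{\mathrm j}\theta}$ is feasible and attains the optimal value, hence is itself optimal. Finally, to conclude that there are infinitely many optimal solutions, I would note that $\mathbf{m}^{*}\neq\mathbf{0}$ (it has unit norm), so distinct phases $\theta\in[0,2\pi)$ produce distinct vectors $\mathbf{m}^{*}{\mathrm e}^{{\mathrm j}\theta}$; this gives a full continuum of optimizers parameterized by $\theta$.

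The honest assessment is that there is no real obstacle here: the result is a direct consequence of the absolute value being insensitive to a unit-modulus factor. The only point requiring a moment of care is notational, namely that $\left\Vert\mathbf{m}^{\mathsf{H}}\mathbf{h}_{k}\right\Vert$ denotes the modulus of the complex scalar $\mathbf{m}^{\mathsf{H}}\mathbf{h}_{k}$ (not a vector norm), which is precisely what makes the phase cancel. I would therefore present the argument as a short two-line verification rather than an elaborate derivation.
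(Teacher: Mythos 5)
Your proof is correct and follows essentially the same route as the paper's: both verify that a global phase rotation preserves the unit-norm constraint \eqref{eq:SIMO2-2} and leaves each $\left\Vert \mathbf{m}^{\mathsf{H}}\mathbf{h}_{k}\right\Vert ^{2}$ unchanged, hence the objective value. Your additional remarks (explicitly showing the phase cancellation and noting that distinct $\theta\in[0,2\pi)$ yield distinct vectors, which pins down the infinitude claim) are fine refinements of detail, not a different argument.
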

	\begin{proof}
		For any $\theta\in\mathbb{R}$, we have $\left\Vert \left(\mathbf{m}^{*}{\mathrm e}^{{\mathrm j}\theta}\right)^{\mathsf{H}}\mathbf{h}_{k}\right\Vert ^{2}=\left\Vert \left(\mathbf{m}^{*}\right)^{\mathsf{H}}\mathbf{h}_{k}\right\Vert ^{2}$
		and $\left\Vert \mathbf{m}^{*}{\mathrm e}^{{\mathrm j}\theta}\right\Vert =\left\Vert \mathbf{m}^{*}\right\Vert $,
		which indicates that solutions $\mathbf{m}^{*}{\mathrm e}^{{\mathrm j}\theta}$ and
		$\mathbf{m}^{*}$ have the same objective value and guarantee constraint \eqref{eq:SIMO2-2}. Thus, $\mathbf{m}^{*}{\mathrm e}^{{\mathrm j}\theta},\forall\theta\in\mathbb{R}$
		is also the optimal solution of problem \eqref{eq:SIMO2}, if $\mathbf{m}^{*}$
		is the optimal. This completes the proof.
	\end{proof}
	\begin{remark}
The physical beam direction remains unchanged under a uniform global phase rotation, since the beam formed by the antenna array depends on the relative phases of $\mathbf{m}$, rather than the absolute overall phase.
		At the receiver, the received signal undergoes a global phase rotation by a factor of ${\mathrm e}^{{\mathrm j}\theta}$, which does not alter the signal’s energy distribution.
		This implies that the beamforming objective is invariant to global phase rotations, and the optimal solution is therefore not unique.
		In practical, this non-uniqueness can be broken by introducing a phase normalization constraint without loss of optimality. For example, the real part of $\mathbf{m}$ is specified positive, which is convenient for device consistency.
	\end{remark}
	
\subsection{Aggregate Beam Steering Optimization}
	
	Given the selected users subset $\mathcal{S}$, we readily arrive at the sub-problem
	expressed as follows
	\begin{align}
		\max_{\mathbf{m}}\quad & \underset{k\in S}{\min}P\left\Vert \mathbf{m}^{\mathsf{H}}\mathbf{h}_{k}\right\Vert ^{2}\quad\mathrm{s.t.}\;\mathrm{\eqref{eq:SIMO2-2}},\label{eq:SIMO3}
	\end{align}
	which is non-convex due to constraint \eqref{eq:SIMO2-2}. Defining
	a semidefinite matrix $\mathbf{M=}\mathbf{m}\mathbf{m}^{\mathsf{H}}$,
	sub-problem \eqref{eq:SIMO3} is converted into
	\begin{subequations}
		\begin{align}
			\max_{\mathbf{M}}\quad & \underset{k\in S}{\min}P\cdot\mathrm{Tr}\left(\mathbf{h}_{k}\mathbf{h}_{k}^{\mathsf{H}}\mathbf{M}\right)\label{eq:SIMO2-1-2-1}\\
			\mathrm{s.t.}\mathrm{\quad} & \mathrm{Tr}\left(\mathbf{M}\right)=1\label{eq:SIMO2-2-2-1}\\
			& \mathrm{Rank}\left(\mathbf{M}\right)=1.\label{eq:rankone}
		\end{align}
	\end{subequations}
	
The reformulation introduces a semidefinite matrix variable $\mathbf{M}$, under which the originally non-linear objective of $\mathbf{m}$ becomes linear in $\mathbf{M}$, and the original non-convex constraints \eqref{eq:SIMO2-2} also becomes convex feasible set excluding the rank constraint \eqref{eq:rankone}.
	Thus, the original problem is transformed into a \textbf{semidefinite programming (SDP)} problem with a rank-1 constraint.
	The most commonly used semidefinite relaxation (SDR) method first drops the rank-1 constraint. Although the relaxed problem yields an objective value close to that of the original problem \cite{luo2007approximation}, the resulting $\mathbf{M}$ tends to be high-rank. This indicates simultaneous beamforming along multiple orthogonal directions, which is infeasible in practical systems that support only a single beam. Several methods have been developed to address the rank-one constraint, including nuclear norm relaxation, penalty-based techniques, and difference-of-convex (DC) programming. Among them, DC programming is particularly attractive as it provides a tighter approximation to the rank function than nuclear norm and offers better convergence guarantees than simple penalty methods \cite{lu2016nonconvex}. By reformulating the rank-one constraint as a difference between convex functions, DC-based algorithms can iteratively approach rank-one solutions while solving convex subproblems at each step.

By adopting DC representation, \eqref{eq:rankone} is rewritten as $\mathrm{Tr}\left(\mathbf{M}\right)-\left\Vert \mathbf{M}\right\Vert _{2}=0$.
	By introducing an auxiliary variable $\tau$, the problem can be further
	converted to 
\begin{subequations}		
\begin{align}
			\max_{\mathbf{M},\tau}\quad & \tau\label{eq:SIMO2-1-2-1-1}\\
			\mathrm{s.t.}\mathrm{\quad} & \mathrm{Tr}\left(\mathbf{h}_{k}\mathbf{h}_{k}^{\mathsf{H}}\mathbf{M}\right)\geq\tau,\forall k\label{eq:SIMO2-proj}\\
			& \mathrm{Tr}\left(\mathbf{M}\right)-\left\Vert \mathbf{M}\right\Vert _{2}=0\nonumber \\
			& \eqref{eq:SIMO2-2-2-1}.\label{eq:rankone-1}
		\end{align}
		\label{eq:SIMO2_equ}
\end{subequations}
	
	It can be readily proved that $\mathrm{Tr}\left(\mathbf{h}_{k}\mathbf{h}_{k}^{\mathsf{H}}\mathbf{M}\right)\in\left[0,\mathbf{h}_{k}^{\mathsf{H}}\mathbf{h}_{k}\right]$.
	To guarantee constraint \eqref{eq:SIMO2-proj}, $\tau$ should be
	within the range of $\left[\tau^{\mathrm{low}},\tau^{\mathrm{up}}\right]$
	with $\tau^{\mathrm{low}}=0$ and $\tau^{\mathrm{up}}=\min_{k\in\mathcal{S}}\left(\mathbf{h}_{k}^{\mathsf{H}}\mathbf{h}_{k}\right)$.
	Since the objective function includes only one variable, this motivates to use the bisection method. With the bisection method, the objective function is fixed at each step and only the feasibility problem remains to be solved.
	Given $\tau$, the above problem is a check problem.
	It is feasible if the solution of the following problem is rank one, otherwise infeasible.
	
\begin{align}
		\min_{\mathbf{M}} & \mathrm{Tr}\left(\mathbf{M}\right)-\left\Vert \mathbf{M}\right\Vert _{2}\mathrm{\quad}\mathrm{s.t.}\mathrm{\;}\mathrm{\eqref{eq:SIMO2-proj}},\eqref{eq:SIMO2-2-2-1}.\label{eq:SIMO2_equ}
\end{align}
	
	Convex function $\left\Vert \mathbf{M}\right\Vert _{2}$ is approximated by $\mathrm{Real}\left(\mathrm{Tr}\left(\mathbf{v}\mathbf{v}^{\mathsf{H}}\mathbf{M}\right)\right)$
	with \textbf{$\mathbf{v}$ }the eigenvector of the largest eigenvalue of $\mathbf{M}^{t-1}$ where the superscript denotes the $(t-1)$-th iteration.
	Thereby, problem (\ref{eq:SIMO2_equ}) can be solved by iteratively solving
	\begin{align}
		\min_{\mathbf{M}} & \mathrm{Tr}\left(\mathbf{M}\right)-\mathrm{Real}\left(\mathrm{Tr}\left(\mathbf{v}\mathbf{v}^{\mathsf{H}}\mathbf{M}\right)\right)\mathrm{\quad}\mathrm{s.t.}\mathrm{\;}\mathrm{\eqref{eq:SIMO2-proj}},\eqref{eq:SIMO2-2-2-1}.\label{eq:SIMO2_equ_iter}
	\end{align}
	The bisection based aggregate beam steering vector design algorithm is summarized in Algorithm \ref{SubA}.
\begin{algorithm}[t] 	
	\caption{Aggregate Beam Steering Design.}
	\begin{algorithmic}[1]
		\Statex \textbf{Input} $\mathbf{h}_{k}$, $\mathcal{S}$
		\Statex \textbf{Output}: $\mathbf{m}$
		\Statex \textbf{Initialize}: $\mathbf{M}^0$, $\tau^{\mathrm{low}}=0$, $\tau^{\mathrm{up}}=\min\left(\mathbf{h}_{k}^{\mathsf{H}}\mathbf{h}_{k}\right)$
		\State \textbf{while} $\tau^{\mathrm{up}}-\tau^{\mathrm{low}}>\delta$
		\State \quad $\tau=(\tau^{\mathrm{up}}+\tau^{\mathrm{low}})/2$
		\State \quad solve problem \eqref{eq:SIMO2_equ_iter} iteratively until convergence
		\State \quad \textbf{if} $\mathrm{Rank(\mathbf{M})}>1$
		\State \qquad problem \eqref{eq:SIMO2_equ} is infeasible under $\tau$
		\State \qquad $\tau^{\mathrm{up}}=\tau$
		\State \quad \textbf{else if} $\mathrm{Rank(\mathbf{M})}=1$
		\State \qquad problem \eqref{eq:SIMO2_equ} is feasible under $\tau$
		\State \qquad $\tau^{\mathrm{low}}=\tau$
	\end{algorithmic}
	\label{SubA}
\end{algorithm}

\begin{algorithm}[t] 	
	\caption{Iterative Aggregate Beam Steering and User Scheduling.}
	\begin{algorithmic}[1]
		\Statex \textbf{Input}: $\mathbf{h}_{k}$, $S$
		\Statex \textbf{Output}: $\mathbf{m}$, $\mathcal{S}$
		\State Randomly select $S$ users from user set $\mathcal{K}$
		\State \textbf{do} loop
		\State \quad Given the selected users set $\mathcal{S}$, obtain $\mathbf{m}$     using  \textbf{Algorithm} \ref{SubA}
		\State \quad Given the aggregated beam steering vector $\mathbf{m}$, calculate the projection $\left\Vert \mathbf{m}^{\mathsf{H}}\mathbf{h}_{k}\right\Vert$
		\State \quad Sort the projection $\left\Vert \mathbf{m}^{\mathsf{H}}\mathbf{h}_{k}\right\Vert$ in a descending order
		\State \quad Select the user with the largest $S$ projection values, and thus obtaining the selected user set $\mathcal{S}$
		\State \textbf{until} $\mathcal{S}$ converges.
	\end{algorithmic}
	\label{Iterative}
\end{algorithm}
Theorem \ref{the:bound} reveals the upper bound on the gap between the solution obtained by the DC-based method and the optimal solution, and characterizes its convergence rate. The proof can be found in Appendix \ref{proof_bound}.
\begin{theorem}
	\label{the:bound}
Let $f(\mathbf{m})=\Vert\mathbf{m}^\mathsf{H} \mathbf{h}_k \Vert^2$ be the quadratic function defined on $\mathbb{C}^{N_r}$. Suppose $f$ satisfies Polyak-\L ojasiewicz (PL) inequality and $\bar{\mathbf{m}}$ is a critical point, that is, $\nabla f(\bar{\mathbf{m}}) = 0$, there exists a constant $\eta>0$ such that $\vert f(\mathbf{m}) - f(\bar{\mathbf{m}})\vert \leq \frac{1}{2\eta}\Vert \mathbf{H}_k\mathbf{m} \Vert^2$, where $\mathbf{H}_k=\mathbf{h}_k\mathbf{h}_k^\mathsf{H}$.
Moreover, we can derive that the DC algorithm converges at a linear rate.
\end{theorem}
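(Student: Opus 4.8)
The plan is to prove the two claims in sequence: first the PL inequality, which turns out to be an exact identity for this rank-one quadratic, and then the linear convergence of the DC iteration, which follows by combining a sufficient-decrease property with the PL bound.

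\emph{Establishing the PL inequality.} First I would write $f$ in quadratic form, $f(\mathbf{m}) = \mathbf{m}^\mathsf{H}\mathbf{H}_k\mathbf{m} = \vert \mathbf{h}_k^\mathsf{H}\mathbf{m}\vert^2$, and record the Wirtinger gradient $\nabla f(\mathbf{m}) = \mathbf{H}_k\mathbf{m} = \mathbf{h}_k(\mathbf{h}_k^\mathsf{H}\mathbf{m})$. The critical-point condition $\nabla f(\bar{\mathbf{m}}) = \mathbf{0}$ forces $\mathbf{h}_k^\mathsf{H}\bar{\mathbf{m}} = 0$ (assuming $\mathbf{h}_k \neq \mathbf{0}$), whence $f(\bar{\mathbf{m}}) = 0$, so that $\bar{\mathbf{m}}$ attains the global minimum. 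The decisive observation is that for the rank-one matrix $\mathbf{H}_k$ the squared gradient is exactly proportional to the value: $\Vert\mathbf{H}_k\mathbf{m}\Vert^2 = \Vert\mathbf{h}_k\Vert^2\,\vert\mathbf{h}_k^\mathsf{H}\mathbf{m}\vert^2 = \Vert\mathbf{h}_k\Vert^2 f(\mathbf{m})$. Choosing $\eta = \tfrac{1}{2}\Vert\mathbf{h}_k\Vert^2$ then gives $f(\mathbf{m}) - f(\bar{\mathbf{m}}) = \frac{1}{2\eta}\Vert\mathbf{H}_k\mathbf{m}\Vert^2$, so the PL inequality holds with equality for this choice, and as an inequality for any $0 < \eta \le \tfrac12\Vert\mathbf{h}_k\Vert^2$.

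\emph{Linear convergence of the DC algorithm.} Next I would cast the inner iteration \eqref{eq:SIMO2_equ_iter} as a DC algorithm applied to $F(\mathbf{M}) = g(\mathbf{M}) - h(\mathbf{M})$ over the convex feasible set defined by \eqref{eq:SIMO2-proj} and \eqref{eq:SIMO2-2-2-1}, where $g(\mathbf{M}) = \mathrm{Tr}(\mathbf{M})$ and $h(\mathbf{M}) = \Vert\mathbf{M}\Vert_2$ are both convex and $\mathbf{v}\mathbf{v}^\mathsf{H} \in \partial h(\mathbf{M}^{t-1})$ is the chosen subgradient. Because $\mathbf{M}^t$ minimizes the convex majorant while the linearization underestimates $h$, the standard DCA argument yields monotone descent $F(\mathbf{M}^t) \le F(\mathbf{M}^{t-1})$ together with a sufficient-decrease estimate $F(\mathbf{M}^{t-1}) - F(\mathbf{M}^t) \ge c\,\Vert\mathbf{M}^t - \mathbf{M}^{t-1}\Vert^2$ for some $c > 0$. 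I would then lift the PL property from the quadratic to $F$ --- equivalently, verify the Kurdyka--\L{}ojasiewicz property with exponent $\tfrac12$ for the DC objective --- and combine it with the sufficient decrease to obtain a contraction $F(\mathbf{M}^t) - F^\star \le \rho\,(F(\mathbf{M}^{t-1}) - F^\star)$ with $\rho \in (0,1)$, which is precisely linear convergence.

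The hard part will be the second claim. The difficulty is twofold. The trace term $g$ is only linear, so the sufficient-decrease constant $c$ cannot come from strong convexity of $g$ and must instead be extracted from the curvature of the spectral-norm term; this in turn requires controlling the top-eigenvector subgradient $\mathbf{v}\mathbf{v}^\mathsf{H}$ near eigenvalue crossings, where $\partial h$ is set-valued. Second, the PL/KL certificate has to be established for the non-smooth matrix objective $F$ over the constrained set, not merely for the smooth scalar $f$. I would address the first issue by appending a vanishing proximal term to the subproblem (or by restricting attention to iterates whose largest eigenvalue is simple, which holds in a neighbourhood of the rank-one limit) so as to secure $c > 0$, and the second by invoking the semialgebraic --- hence KL --- nature of $F$ together with the proportionality identity above to pin the KL exponent at $\tfrac12$, after which the linear rate follows from the standard PL-plus-descent machinery.
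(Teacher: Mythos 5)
Your first claim is handled correctly, and in fact more concretely than in the paper: you derive the PL inequality as an exact identity, $\Vert\mathbf{H}_k\mathbf{m}\Vert^2 = \Vert\mathbf{h}_k\Vert^2 f(\mathbf{m})$, with the explicit constant $\eta = \tfrac{1}{2}\Vert\mathbf{h}_k\Vert^2$, whereas the paper simply takes the PL property as a hypothesis, identifies $\operatorname{co}(\nabla f(\mathbf{m}))$ with $\{\mathbf{H}\mathbf{m}\}$ by smoothness, and moves on. That half of your argument is sound.

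The genuine gap is in the second claim, and it sits exactly at the two points you flag yourself; the patches you propose do not repair them. First, the sufficient-decrease constant $c>0$ is unobtainable for this DC pair: $g(\mathbf{M})=\mathrm{Tr}(\mathbf{M})$ is linear and $h(\mathbf{M})=\Vert\mathbf{M}\Vert_2$ is positively homogeneous of degree one, hence has zero curvature along rays and is never strongly convex --- not even locally near a point where the top eigenvalue is simple. So restricting to simple-eigenvalue iterates buys you smoothness of $h$ but not a positive modulus, and appending a proximal term proves convergence of a different algorithm (proximal DCA), not of iteration \eqref{eq:SIMO2_equ_iter}. Second, your KL-exponent-$\tfrac{1}{2}$ certificate for the actual objective $F(\mathbf{M})=\mathrm{Tr}(\mathbf{M})-\Vert\mathbf{M}\Vert_2$ over the constraint set is justified by appeal to ``the proportionality identity above,'' but that identity concerns the scalar quadratic $f(\mathbf{m})$ on $\mathbb{C}^{N_r}$ --- a different function on a different space --- so nothing transfers. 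Semialgebraicity gives KL with \emph{some} exponent, but without pinning it at $\tfrac{1}{2}$ the descent-plus-KL machinery yields only sublinear rates, not the claimed linear one.

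For contrast, the paper closes this step by a route that needs neither sufficient decrease nor a KL certificate for $F$: following the performance-estimation framework of \cite{abbaszadehpeivasti2024rate}, it writes the interpolation inequalities for the two convex components (with smoothness constants $L_1,L_2$ and $\mu_1=\mu_2=0$), adjoins the PL inequality as a constraint, and aggregates all of them with nonnegative multipliers to obtain the one-step contraction $f(x^2)-f^{*}\leq \frac{1-\eta/L_1}{1+\eta/L_2}\left(f(x^1)-f^{*}\right)$ directly. If you wish to salvage your route, the missing piece you must actually prove is the exponent-$\tfrac{1}{2}$ KL inequality for $F$ itself together with a descent estimate that does not rely on strong convexity; otherwise the multiplier-aggregation argument is the workable path.
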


\subsection{User Scheduling}
	
Once the aggregate beam steering vector $\mathbf{m}$ is obtained,
	problem \eqref{eq:SIMO2} is reduced into the determination of selected
	user subset $\mathcal{S}$, formulated as
	\begin{align}
		\max_{\mathcal{S\subseteq K}} & \underset{k}{\min}P\left\Vert \mathbf{m}^{\mathsf{H}}\mathbf{h}_{k}\right\Vert ^{2}\quad\mathrm{s.t.}\mathrm{\;}\mathrm{\mathrm{\eqref{eq:pro1_3-1}}},\label{eq:SIMO2_subB}
	\end{align}
	The sub-problem can be interpreted to find $S$ users with the largest
	projection values $\left\Vert \mathbf{m}^{\mathsf{H}}\mathbf{h}_{k}\right\Vert $.
	Its optimal solution is obtained accordingly.
	
	The overall solving procedure for problem \eqref{eq:SIMO} involves
	alternately solving sub-problem \eqref{eq:SIMO3} and \eqref{eq:SIMO2_subB}
	until convergence. In each iteration, solving sub-problem \eqref{eq:SIMO3}
	involves checking problem \eqref{eq:SIMO2_equ} many times, the computational
	complexity of which is $\mathcal{O}\left(N_r^{6}\log_{2}\frac{1}{\delta}\right)$.
	Obtaining the optimal solution of sub-problem \eqref{eq:SIMO2_subB}
	is much easier with $\mathcal{O}\left(K\log(K)\right)$. Algorithm \ref{Iterative}
	gives the summary of the proposed iterative method. Note that the
	iteration is ended when the selective subset $\mathcal{S}$ is converged
	rather than the aggregate beam steering vector $\mathbf{m}$, which
	is due to the numerous solutions for the same objective according
	to Lemma \ref{thm:infinity}.	
	
\section{Channel-based User Scheduling Policy}
	
	The previous section gives the sub-optimal method by iteratively
	solving the aggregate beam steering $\mathbf{m}$ and the user
	subset $\mathcal{S}$. However, the computational complexity of this
	method can be intense as the solution-obtaining requires many iterations.
	To tackle this challenge, a channel-based user scheduling policy
	is proposed in this section to determine the selected user subset
	$\mathcal{S}$ first, so that sub-problem \eqref{eq:SIMO2_equ} is
	solved only once to obtain the aggregate beam steering vector $\mathbf{m}$.
	
	We first present the analysis on the inner product of the channel
	vectors, which offer the fundamentals for the proposed low-complexity
	user scheduling policy. For simplicity, we first consider the channel
	vectors $\mathbf{h}_{i}$ with the same modulus, which can be
	assumed such that $\left\Vert \mathbf{h}_{i}\right\Vert =1,\forall i\in\mathcal{S}$.
	The angle between any two channel vectors of users $i$ and $j$
	is defined as
	\begin{equation}
		\alpha_{i,j}=\arccos\frac{\left\Vert \mathbf{h}_{i}^{\mathsf{H}}\mathbf{h}_{j}\right\Vert }{\left\Vert \mathbf{h}_{i}\right\Vert \left\Vert \mathbf{h}_{j}\right\Vert }=\arccos\left\Vert \mathbf{h}_{i}^{\mathsf{H}}\mathbf{h}_{j}\right\Vert \in\left[0,\frac{\pi}{2}\right],
	\end{equation}
	which also equals to $\alpha_{j,i}$. Similarly, the angle between
	the optimal aggregate beam steering vector $\mathbf{m}^{*}$ and
	$\mathbf{h}_{i},\forall i\in\mathcal{S}$ is
	\begin{equation}
		\theta_{i}=\arccos\frac{\left\Vert \mathbf{m}^{*\mathsf{H}}\mathbf{h}_{i}\right\Vert }{\left\Vert \mathbf{m}^{*}\right\Vert \left\Vert \mathbf{h}_{i}\right\Vert }=\arccos\left\Vert \mathbf{m}^{*\mathsf{H}}\mathbf{h}_{i}\right\Vert \in\left[0,\frac{\pi}{2}\right].
	\end{equation}
	
	Let $\alpha=\max_{\forall i,j\in\mathcal{S}}\alpha_{i,j}$ be the maximum angle between any two channel vectors, we have the following theorem in terms of $\alpha$ and $\theta_{i}$.
	\begin{theorem}
		\label{thmtheorem1}Given the selected user subset $\mathcal{S}$
		and further assume $\left\Vert \mathbf{h}_{i}\right\Vert =1,\forall i\in\mathcal{S}$,
		we have $\alpha\geq\theta_{i}\geq0,\forall i\in\mathcal{S}$.
	\end{theorem}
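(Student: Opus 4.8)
The plan is to exploit the monotonicity of $\arccos$ to recast the angular inequality as a statement about inner-product moduli, and then to bound the optimal beam steering projection $\Vert\mathbf{m}^{*\mathsf{H}}\mathbf{h}_i\Vert$ from below by testing the max-min objective against a conveniently chosen feasible vector, without ever computing $\mathbf{m}^*$ explicitly.

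First I would dispose of the trivial half: $\theta_i\geq 0$ holds by the range of $\arccos$. For the substantive claim $\alpha\geq\theta_i$, since $\cos$ is strictly decreasing on $[0,\pi/2]$ and all angles involved lie in that interval, the inequality is equivalent to $\cos\alpha\leq\cos\theta_i$. Here I would note that $\cos\alpha=\cos\big(\max_{k,l\in\mathcal{S}}\alpha_{k,l}\big)=\min_{k,l\in\mathcal{S}}\Vert\mathbf{h}_k^{\mathsf{H}}\mathbf{h}_l\Vert$, because the largest pairwise angle corresponds to the smallest inner-product modulus; the diagonal terms equal $\Vert\mathbf{h}_k\Vert^2=1$ and so never attain the minimum over distinct pairs. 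Since $\cos\theta_i=\Vert\mathbf{m}^{*\mathsf{H}}\mathbf{h}_i\Vert$, the goal reduces to showing $\min_{k,l\in\mathcal{S}}\Vert\mathbf{h}_k^{\mathsf{H}}\mathbf{h}_l\Vert\leq\Vert\mathbf{m}^{*\mathsf{H}}\mathbf{h}_i\Vert$ for every $i\in\mathcal{S}$.

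Next I would invoke optimality. For the fixed subset $\mathcal{S}$, $\mathbf{m}^*$ maximizes the max-min problem \eqref{eq:SIMO3}, so $\min_{k\in\mathcal{S}}\Vert\mathbf{m}^{*\mathsf{H}}\mathbf{h}_k\Vert=\max_{\Vert\mathbf{m}\Vert=1}\min_{k\in\mathcal{S}}\Vert\mathbf{m}^{\mathsf{H}}\mathbf{h}_k\Vert$. The crucial point is that attaining the maximum means any unit-norm candidate gives a valid lower bound. I would therefore substitute $\mathbf{m}=\mathbf{h}_j$ for an arbitrary $j\in\mathcal{S}$, which is feasible since $\Vert\mathbf{h}_j\Vert=1$. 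Evaluating the inner minimum, the term $k=j$ contributes $\Vert\mathbf{h}_j^{\mathsf{H}}\mathbf{h}_j\Vert=1$, while each $k\neq j$ contributes $\Vert\mathbf{h}_j^{\mathsf{H}}\mathbf{h}_k\Vert=\cos\alpha_{j,k}\geq\cos\alpha$; hence $\min_{k\in\mathcal{S}}\Vert\mathbf{h}_j^{\mathsf{H}}\mathbf{h}_k\Vert\geq\cos\alpha$.

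Chaining these yields $\min_{k\in\mathcal{S}}\Vert\mathbf{m}^{*\mathsf{H}}\mathbf{h}_k\Vert\geq\cos\alpha$, and in particular $\Vert\mathbf{m}^{*\mathsf{H}}\mathbf{h}_i\Vert\geq\cos\alpha$ for each $i$; applying the decreasing map $\arccos$ returns $\theta_i\leq\alpha$, which completes the argument. The step I expect to be the main obstacle is making the reduction $\cos\alpha=\min_{k\neq l}\Vert\mathbf{h}_k^{\mathsf{H}}\mathbf{h}_l\Vert$ airtight and confirming that the bound controls $\max_i\theta_i$ uniformly rather than only some single $\theta_i$; this holds precisely because the feasible-point estimate lower-bounds the \emph{min} over all $k$, so the conclusion propagates to every index simultaneously. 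Everything else is routine monotonicity together with a one-line feasibility check.
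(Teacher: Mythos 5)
Your proof is correct and is essentially the paper's argument: both hinge on using a channel vector from $\mathcal{S}$ as a feasible unit-norm candidate in the max--min problem, whose minimum projection is at least $\cos\alpha$, and then invoking optimality of $\mathbf{m}^*$ to conclude $\Vert\mathbf{m}^{*\mathsf{H}}\mathbf{h}_i\Vert\geq\cos\alpha$ for every $i$. The only cosmetic difference is that you argue directly with an arbitrary $\mathbf{h}_j$, whereas the paper phrases it as a contradiction using an endpoint $\mathbf{h}_p$ of the maximum-angle pair.
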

	\begin{proof}
		The theorem can be proved by contradiction. Assume that there exists a user $k\in\mathcal{S}$ such that $\alpha<\theta_{k}$, then we have
		\begin{equation}
			\left\Vert \mathbf{m}^{*\mathsf{H}}\mathbf{h}_{k}\right\Vert =\cos\theta_{k}<\cos\alpha.\label{eq:lemma1}
		\end{equation}
		Without loss of generality, we can assume that the angle between $\mathbf{h}_{p}$
		and $\mathbf{h}_{q}$ has the maximum value, i.e., $\left\Vert \mathbf{h}_{p}^{\mathsf{H}}\mathbf{h}_{q}\right\Vert =\cos\alpha$.
		By designing the aggregate beam steering vector $\mathbf{m}$ as
		$\mathbf{h}_{p}$ (or $\mathbf{h}_{q}$), we can derive that
		\begin{equation}
			\min_{i\in\mathcal{S}}\left(\left\Vert \mathbf{m}^{\mathsf{H}}\mathbf{h}_{i}\right\Vert \right)=\cos\alpha\leq\left\Vert \mathbf{m}^{*\mathsf{H}}\mathbf{h}_{k}\right\Vert =\cos\theta_{k},\label{eq:lemma2}
		\end{equation}
		which is contradict to (\ref{eq:lemma1}). Thus, the presumption that
		there exists a user $k\in\mathcal{S}$ with $\theta_{k}>\alpha$
		does not hold, and the proof is completed.
	\end{proof}
	\begin{corollary}
		Given the selected user subset $\mathcal{S}$ and further assume
		$\left\Vert \mathbf{h}_{i}\right\Vert =1,\forall i\in\mathcal{S}$,
		the optimal objective value\textup{ $\mathrm{obj^{*}}$} is lower
		bounded by $P {\cos}^2\alpha$, i.e., $\mathrm{obj^{*}}\geq P{\cos}^2\alpha$.
	\end{corollary}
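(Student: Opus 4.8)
The plan is to read off the result directly from Theorem \ref{thmtheorem1}, since the corollary is essentially a restatement of that angle bound in terms of the objective value. First I would fix the notation: for the given subset $\mathcal{S}$ with $\|\mathbf{h}_i\|=1$, the quantity $\mathrm{obj}^*$ denotes the optimal value of the beam steering subproblem \eqref{eq:SIMO3}, attained by the optimal unit-norm vector $\mathbf{m}^*$, so that
\[
\mathrm{obj}^* = \min_{i\in\mathcal{S}} P\left\Vert \mathbf{m}^{*\mathsf{H}}\mathbf{h}_i\right\Vert^2 = P\min_{i\in\mathcal{S}}\cos^2\theta_i,
\]
where the last equality uses the definition $\left\Vert \mathbf{m}^{*\mathsf{H}}\mathbf{h}_i\right\Vert = \cos\theta_i$ together with the unit-norm assumptions.

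Next I would apply the inequality $\alpha \geq \theta_i \geq 0$ for all $i\in\mathcal{S}$ supplied by Theorem \ref{thmtheorem1}. Since all angles lie in $[0,\tfrac{\pi}{2}]$, where the cosine is monotonically decreasing and nonnegative, the chain $\alpha \geq \theta_i \geq 0$ yields $\cos\theta_i \geq \cos\alpha \geq 0$. Because both sides are nonnegative, squaring preserves the inequality, giving $\cos^2\theta_i \geq \cos^2\alpha$ for every $i\in\mathcal{S}$.

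Finally I would take the minimum over $i\in\mathcal{S}$: since each term is bounded below by the common constant $\cos^2\alpha$, the minimum is as well, so $\min_{i\in\mathcal{S}}\cos^2\theta_i \geq \cos^2\alpha$. Multiplying through by $P$ and substituting into the expression for $\mathrm{obj}^*$ from the first step delivers $\mathrm{obj}^* \geq P\cos^2\alpha$, completing the argument.

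There is no substantive obstacle here; the corollary is an immediate consequence of Theorem \ref{thmtheorem1}. The only point requiring minor care is the use of monotonicity and nonnegativity of cosine on $[0,\tfrac{\pi}{2}]$ to justify squaring the inequality in the correct direction, which is guaranteed by the fact that $\alpha$ and all $\theta_i$ are confined to that interval by their definitions as $\arccos$ of quantities in $[0,1]$.
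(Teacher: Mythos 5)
Your proposal is correct and matches the paper's intended derivation: the corollary is stated without a separate proof precisely because it follows from Theorem \ref{thmtheorem1} exactly as you argue, namely $\mathrm{obj}^{*}=P\min_{i\in\mathcal{S}}\cos^{2}\theta_{i}\geq P\cos^{2}\alpha$ by monotonicity and nonnegativity of cosine on $\left[0,\tfrac{\pi}{2}\right]$. The only cosmetic remark is that the same bound also drops out of the feasibility construction inside the paper's proof of Theorem \ref{thmtheorem1} (taking $\mathbf{m}=\mathbf{h}_{p}$ gives objective value exactly $P\cos^{2}\alpha$), but this is the same idea viewed from the other side and does not constitute a different route.
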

	\begin{remark}
		It can be summarized from Theorem 1 and Corollary 1 that users whose channels have a smaller angle are probability to decrease the maximum angle $\alpha$ of the subset $\mathcal{S}$ and can increase the lower bound of the objective value\textup{ $\mathrm{obj^{*}}$}. In wireless communications, phase deviation can be intuitively interpreted as transmission delays resulting from differences in path length, refraction, diffraction, and other factors. Users with smaller phase angles exhibit better alignment, allowing for stronger constructive superposition in the intended direction and more precise signal delivery. The improvement of the proposed selection criterion over random user selection can be found in Fig. \ref{MSE-K_Nr}.
		Therefore, these users are in preference to others when selecting the subset $\mathcal{S}$ .
	\end{remark}
	
	Without loss of generality, we assume that $\left\Vert \mathbf{m}^{*\mathsf{H}}\mathbf{h}_{i}\right\Vert \geq\left\Vert \mathbf{m}^{*\mathsf{H}}\mathbf{h}_{j}\right\Vert ,\forall i\leq j$
	with $\mathbf{m}^{*}$ the optimal solution, then the optimal
	objective value of problem \eqref{eq:SIMO2} is $\mathrm{obj^{*}}=P\left\Vert \mathbf{m}^{*\mathsf{H}}\mathbf{h}_{S}\right\Vert ^{2}$.
	Considering the general case where the channels have various gains such that $\bar{\mathbf{h}}_{s}=\left\Vert \bar{\mathbf{h}}_{s}\right\Vert /\mathbf{h}_{s}$
	with $\left\Vert \bar{\mathbf{h}}_{s}\right\Vert \geq1$, we have the following theorem.
	
	\begin{theorem}
		\label{thmtheorem2}
		Denoting the optimal objective value under $\mathbf{\bar{h}}_{s}$
		as $\mathrm{\overline{obj}}^{*}$, we have $\mathrm{\overline{obj}}^{*}\geq\mathrm{obj^{*}}$
		if $\left\Vert \bar{\mathbf{h}}_{s}\right\Vert \geq\left\Vert \mathbf{h}_{s}\right\Vert ,\forall s\in\mathcal{S}$,
		where $\bar{\mathbf{h}}_{s}/\left\Vert \bar{\mathbf{h}}_{s}\right\Vert =\mathbf{h}_{s}/\left\Vert \mathbf{h}_{s}\right\Vert $.
	\end{theorem}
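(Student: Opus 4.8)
The plan is to show that scaling up each channel's magnitude while keeping its direction fixed can only increase the optimal max--min objective, by establishing a pointwise (per-user) domination of the inner objective and then propagating that domination through the $\min$ and the outer $\max$.

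First I would rewrite the hypothesis in a cleaner form. Since $\bar{\mathbf{h}}_{s}/\left\Vert \bar{\mathbf{h}}_{s}\right\Vert =\mathbf{h}_{s}/\left\Vert \mathbf{h}_{s}\right\Vert$, the two vectors are collinear, so I may write $\bar{\mathbf{h}}_{s}=c_{s}\mathbf{h}_{s}$ with the scale factor $c_{s}=\left\Vert \bar{\mathbf{h}}_{s}\right\Vert /\left\Vert \mathbf{h}_{s}\right\Vert \geq 1$, where the inequality follows from the assumption $\left\Vert \bar{\mathbf{h}}_{s}\right\Vert \geq\left\Vert \mathbf{h}_{s}\right\Vert$. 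Next, fixing an arbitrary feasible $\mathbf{m}$ with $\left\Vert \mathbf{m}\right\Vert =1$, I would record for each $s\in\mathcal{S}$ the identity and bound
\begin{equation}
\left\Vert \mathbf{m}^{\mathsf{H}}\bar{\mathbf{h}}_{s}\right\Vert ^{2}=c_{s}^{2}\left\Vert \mathbf{m}^{\mathsf{H}}\mathbf{h}_{s}\right\Vert ^{2}\geq\left\Vert \mathbf{m}^{\mathsf{H}}\mathbf{h}_{s}\right\Vert ^{2},
\end{equation}
which holds uniformly over the sphere because $c_{s}\geq 1$. Since this inequality holds for every $s$, monotonicity of the minimum yields $\min_{s\in\mathcal{S}}P\left\Vert \mathbf{m}^{\mathsf{H}}\bar{\mathbf{h}}_{s}\right\Vert ^{2}\geq\min_{s\in\mathcal{S}}P\left\Vert \mathbf{m}^{\mathsf{H}}\mathbf{h}_{s}\right\Vert ^{2}$ for that same $\mathbf{m}$.

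Finally I would propagate the bound through the outer maximization. Letting $\mathbf{m}^{*}$ attain $\mathrm{obj}^{*}$ for the original (unscaled) problem, I evaluate the scaled objective at this same $\mathbf{m}^{*}$ and then use that $\overline{\mathrm{obj}}^{*}$ is the maximum over all unit-norm $\mathbf{m}$:
\begin{equation}
\overline{\mathrm{obj}}^{*}=\max_{\left\Vert \mathbf{m}\right\Vert =1}\min_{s\in\mathcal{S}}P\left\Vert \mathbf{m}^{\mathsf{H}}\bar{\mathbf{h}}_{s}\right\Vert ^{2}\geq\min_{s\in\mathcal{S}}P\left\Vert \mathbf{m}^{*\mathsf{H}}\bar{\mathbf{h}}_{s}\right\Vert ^{2}\geq\min_{s\in\mathcal{S}}P\left\Vert \mathbf{m}^{*\mathsf{H}}\mathbf{h}_{s}\right\Vert ^{2}=\mathrm{obj}^{*},
\end{equation}
which is the desired conclusion.

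There is no genuinely hard step here; the argument is entirely one of monotonicity. The only point requiring care is the order of the quantifiers: the pointwise bound must be established for a fixed $\mathbf{m}$ before the minimum is taken, and the final inequality must be obtained by \emph{testing} the scaled problem at the original optimizer $\mathbf{m}^{*}$, rather than by directly comparing the two (possibly distinct) optimizers. I would also note explicitly that the feasible set for $\mathbf{m}$ (the unit sphere $\left\Vert \mathbf{m}\right\Vert =1$) and the fixed subset $\mathcal{S}$ are identical in both problems, so the comparison $\overline{\mathrm{obj}}^{*}\geq\mathrm{obj}^{*}$ is well posed.
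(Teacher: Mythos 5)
Your proof is correct and follows essentially the same route as the paper's: both establish the pointwise domination $\left\Vert \mathbf{m}^{\mathsf{H}}\bar{\mathbf{h}}_{s}\right\Vert \geq\left\Vert \mathbf{m}^{\mathsf{H}}\mathbf{h}_{s}\right\Vert$ from collinearity, evaluate the scaled objective at the original optimizer $\mathbf{m}^{*}$, and then invoke optimality of $\overline{\mathrm{obj}}^{*}$ over the same feasible set. Your version is marginally more explicit (introducing $c_{s}\geq 1$ and avoiding the paper's reliance on the indexing convention $\mathrm{obj}^{*}=P\left\Vert \mathbf{m}^{*\mathsf{H}}\mathbf{h}_{S}\right\Vert ^{2}$), but the argument is the same.
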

	\begin{proof}
		Since channel vectors $\mathbf{\bar{h}}_{s}$ and $\mathbf{h}_{s}$
		have the same direction, it is obvious that $\left\Vert \mathbf{m}^{\mathsf{H}}\bar{\mathbf{h}}_{s}\right\Vert \geq\left\Vert \mathbf{m}^{\mathsf{H}}\mathbf{h}_{s}\right\Vert ,\forall s\in\mathcal{S}$
		for any given aggregate beam steering vector $\mathbf{m}$. As a
		result, we have
		\begin{equation}
			\left\Vert \mathbf{m}^{*\mathsf{H}}\bar{\mathbf{h}}_{s}\right\Vert \geq\left\Vert \mathbf{m}^{*\mathsf{H}}\mathbf{h}_{s}\right\Vert \geq\left\Vert \mathbf{m}^{*\mathsf{H}}\mathbf{h}_{S}\right\Vert ,\forall s\in\mathcal{S},
		\end{equation}
		which implies $\underset{s}{\min}P\left\Vert \mathbf{m}^{*\mathsf{H}}\bar{\mathbf{h}}_{s}\right\Vert ^{2}\geq P\left\Vert \mathbf{m}^{*\mathsf{H}}\mathbf{h}_{S}\right\Vert ^{2}=\mathrm{obj^{*}}.$
		Furthermore, as $\mathbf{m}^{*}$ may not be the optimal solution
		under $\mathbf{\bar{h}}_{s}$, the following inequality holds,
		i.e.,
		\[
		\mathrm{\overline{obj}}^{*}\geq\underset{s\in\mathcal{S}}{\min}P\left\Vert \mathbf{m}^{*\mathsf{H}}\bar{\mathbf{h}}_{s}\right\Vert ^{2}\geq\mathrm{obj^{*}}.
		\]
		The proof is complete.
	\end{proof}
	\begin{remark}
		Theorem 2 provides designers with another inspiration.
		Those users with larger channel vector modulus tend to bring higher objective value, so they should be given priority to when selecting the subset $\mathcal{S}$. This can be intuitively understood as follows: channels with larger moduli are less susceptible to path loss, shadow fading, and small-scale fading. In other words, they exhibit stronger channel gains, resulting in higher received power.
	\end{remark}
	
	From the above analysis on the angles between the channels as well as their modulus, we can gain the following observation. That is, the users whose channels have a smaller maximum angle $\alpha$ and lager modulus $\left\Vert \mathbf{h}\right\Vert $ are more likely to achieve a better objective value. Based on the observation, we consider the absolute of inner product $\left\Vert \mathbf{h}_{i}^{\mathsf{H}}\mathbf{h}_{j}\right\Vert ,\forall i\leq j\in\mathcal{K}$ as a condition to select the users.
	
	Firstly, a user should be selected to initialize the subset $\mathcal{S}$. One approach named as ``policy'' is to select the user with the largest channel vector modulus $\left\Vert{\mathbf{h}}\right\Vert$ directly, whose process is illustrated in Fig. \ref{figure_policy} in detail. That is, $\mathcal{S} \leftarrow \{ \arg\max_{k\in\mathcal{K}}\left\Vert\mathbf{h}_k\right\Vert \}$. In each subsequent step, a new user is selected based on the inner product, and then added to the user subset $\mathcal{S}$. Specifically, in order to select the $s$-th user, we calculate the maximum inner product $ \left( \min_{s\in\mathcal{S'}}\left\Vert \mathbf{h}_{s}^{\mathsf{H}}\mathbf{h}_{i}\right\Vert \right) $ for each unselected user $i \notin \mathcal{S}$. Then we select the user $j$ which satisfies the following criteria
	\begin{equation}
		j = \arg\max_{i\in\mathcal{K}\backslash\mathcal{S'}}\left(\min_{s\in\mathcal{S'}}\left\Vert \mathbf{h}_{s}^{\mathsf{H}}\mathbf{h}_{i}\right\Vert \right).
	\end{equation}
	Then user $j$ can be included in subset $\mathcal{S}$ as follows
	\begin{equation}
		\mathcal{S'}\leftarrow\mathcal{S'}\cup\left\{ j\right\}.
	\end{equation}
	
	\begin{figure}
		\centering\includegraphics[scale=0.35]{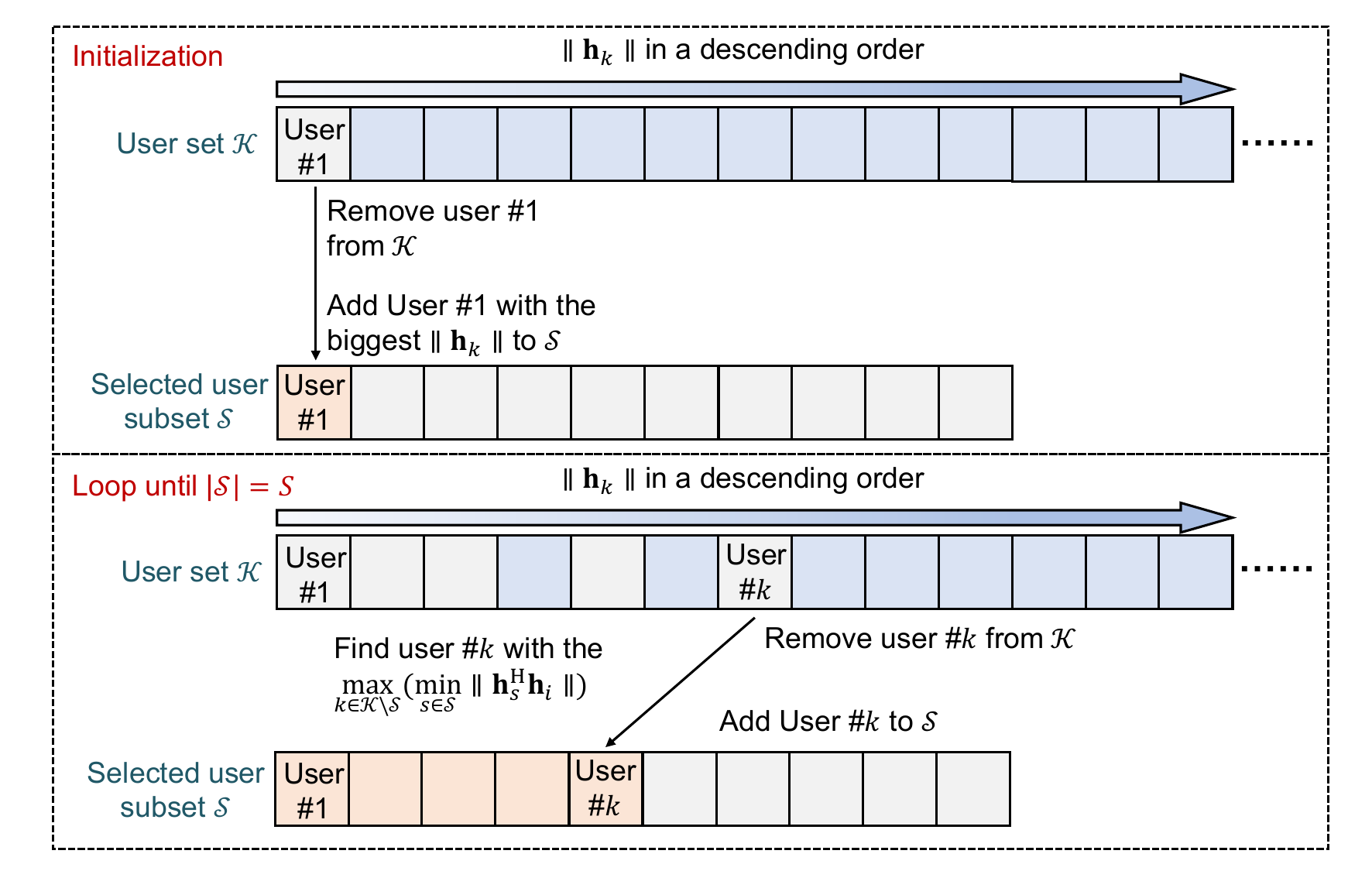}
		\caption{An illustration of low-complexity user scheduling policy.}
		\label{figure_policy}
	\end{figure}
	
	Nonetheless, the marginal difference in the channel vector modulus implies that the channels with the second or third largest modulus may possess a reduced maximum angle $\alpha$ with other users. For this reason, the scheduling criterion for the leading user within the ``Policy\_greedy'' diverges from the conventional ``Policy'' that prioritizes the user with the largest channel vector modulus.
	An illustration of the ``Policy\_greedy'' is presented in Fig. \ref{figure_policy_greedy}.
	Instead, ``Policy\_greedy'' systematically identifies the initial user from a subset comprised of $G$ users that exhibit the topmost $G$ channel gains, where $G$ is predetermined to be $G=5$.
	Consequently, the selected users subset $\mathcal{S}$ is obtained with the largest $\mathrm{obj}$ value among the $G$ considered objective values.
	The comprehensive procedure of ``Policy\_greedy'' is encapsulated in Algorithm \ref{Scheduling}, which degenerates to the ``Policy'' method when $G=1$.
	It can be expected that the ``Policy\_greedy'' will outperform the standard ``Policy'', which can be verified in the subsequent simulation outcomes.
	\begin{algorithm}[t] 	
		\caption{Greedy Low-complexity User Scheduling Policy.}
		\begin{algorithmic}[1]
			\Statex \textbf{Input}: $\mathbf{h}_{k}$, $G$
			\Statex \textbf{Output}: $\mathcal{S}$
			\Statex \textbf{Initialize}: $\mathbf{val}=0$, $c = 0$
			\State sort the channel vector modulus $\left\Vert \mathbf{h}_{k}\right\Vert$ in a descending order
			\State \textbf{for} $k=1:K$
			\State \quad $\mathcal{S'}\leftarrow \left\{k\right\}$
			\State \quad \textbf{for} $s=1:S-1$
			
			\State \qquad $\mathrm{tmp}=\max_{i\in\mathcal{K}\backslash\mathcal{S'}}\left(\min_{s\in\mathcal{S'}}\left\Vert \mathbf{h}_{s}^{\mathsf{H}}\mathbf{h}_{i}\right\Vert \right)$
			
			\State \qquad \textbf{if} $\mathrm{tmp}<\min{\mathbf{val}}$
			\State \qquad\quad break
			
			\State \qquad $j = \arg\max_{i\in\mathcal{K}\backslash\mathcal{S'}}\left(\min_{s\in\mathcal{S'}}\left\Vert \mathbf{h}_{s}^{\mathsf{H}}\mathbf{h}_{i}\right\Vert \right)$
			
			\State \qquad $\mathcal{S'}\leftarrow\mathcal{S'}\cup\left\{ j\right\}$
			\State \quad \textbf{end for}
			\State \quad \textbf{if} $c < G$
			\State \qquad $\mathbf{val}_c = \mathrm{tmp}$
			\State \qquad $\mathcal{S}_{c} = \mathcal{S'}$
			\State \qquad $ c = c +1$
			\State \quad \textbf{else if} $\mathrm{tmp} > \min{\mathbf{val}} $
			\State \qquad $\mathrm{index} = \arg\min{\mathbf{val}}$
			\State \qquad $\mathcal{S}_{\mathrm{index}} = \mathcal{S'}$
			\State \qquad $\mathbf{val}_{\mathrm{index}} = \mathrm{tmp}$
			\State \textbf{end for}
			\State Find the maximum objective value $\mathbf{val}_i$ of $\{\mathbf{val}_1, \mathbf{val}_2, \ldots, \mathbf{val}_G \}$
			\State Find the user subset $\mathcal{S}_i$ corresponding to $\mathbf{val}_i$
			\State Final selected user subset $\mathcal{S} = \mathcal{S}_i$
		\end{algorithmic}
		\label{Scheduling}
	\end{algorithm}

	\begin{figure}
		\centering\includegraphics[scale=0.27]{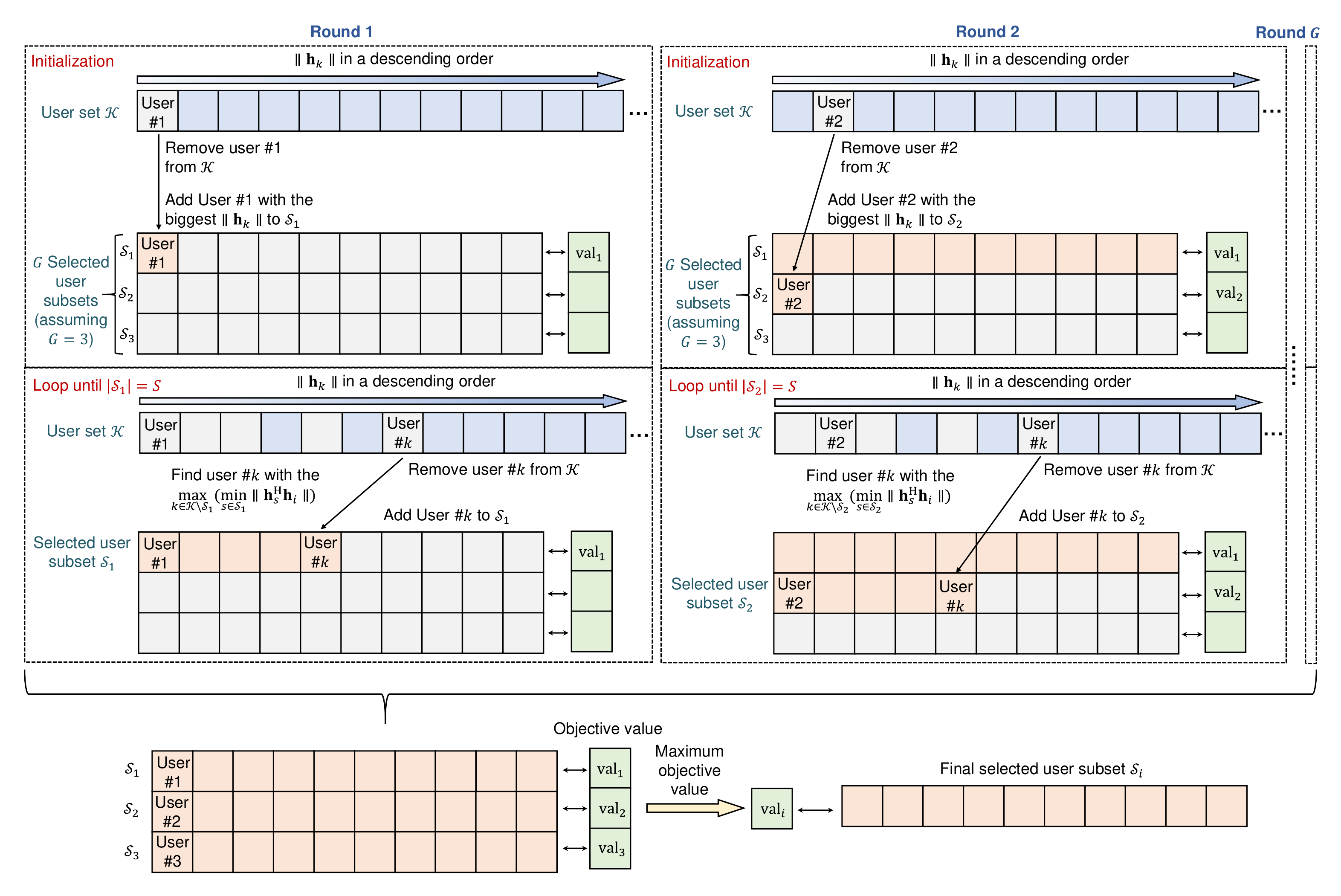}
		\caption{An illustration of greedy low-complexity user scheduling policy.}
		\label{figure_policy_greedy}
	\end{figure}
	
	The complexity of the proposed ``Policy\_greedy'' user scheduling policy is $\mathcal{O}\left(K^{2}\right)$.
	The proposed scheduling method, however also sub-optimal, is more computationally efficient than the iterative method proposed in Section III whose complexity is $\mathcal{O}\left(N_r^{6}\log_{2}\frac{1}{\delta}\right)$.
	After the selected user subset $\mathcal{S}$ is obtained, the aggregate beam steering vector $\mathbf{m}$ can be optimized using Algorithm \ref{SubA}.

\section{Experimental Evaluation and Analysis}
	
This section presents experimental results to evaluate the performance of the proposed iterative aggregate beam steering and user scheduling method (referred to as ``Iterative''), the channel-based low-complexity policy (``Policy''), and the greedy policy (``Policy\_greedy'') in terms of MSE and learning performance.
In our experimental settings, the total number of users $K$ ranges from 20 to 200, from which we select a subset $\mathcal{S}$ to participate in the FL learning. If not otherwise specified, the size of $\mathcal{S}$ defaults to $S=10$.
Simulations were conducted under varying numbers of antennas $N_r = \{8, 16\}$ at the aggregator. The maximum transmit power defaults to $P = 0\;{\rm{dBm}}$. To showcase the advantages of the proposed methods, we compare them against several reference approaches:
	\begin{enumerate}
		\item\textit{Subgradient:} This approach jointly optimizes the beamforming vector and user subset selection using the projected subgradient method \cite{kim2023}. As a low-complexity solution, it is in orders of magnitude faster than conventional SDR-based algorithms.
\item \textit{Iterative reweighted beamforming:}
		This heuristic method is extracted from \cite{li2024antenna} and denoted as ``Iterative reweighted''. At each iteration, user channels are weighted according to their current projection strength, encouraging the beamformer to favor directions aligned with stronger users.
\item \textit{Reinforcement learning:} Although \textbf{Reinforcement Learning (RL)} has not to be employed for the joint beam steering and user scheduling problem, its effectiveness in similar decision optimization tasks motivated us to build an Actor-Critic network based on \cite{zhang2025joint}. We altered the action space to encompass continuous beamforming control and discrete user selection, and conducted simulations to for comparison.
		\item \textit{Random aggregate beamforming and device selection design:} Its aggregate beamforming vector is randomly sampled from a complex unit sphere $\mathbb{C}^N$, and the device subset is determined by sorting the equivalent channel power $\Vert \mathbf{m}^\mathsf{H}\mathbf{h}_k \Vert^2$ \cite{Xu2024Random}. This method is labeled as ``Random beamforming'' in the comparisons.
		\item \textit{Random device selection and beamforming design:} This method randomly selects $S$ users out of $K$ users, followed by optimizing the aggregate beamforming vector $\mathbf{m}$ using DC technique. It is labeled as ``Random selection''.
	\end{enumerate}
	
\subsection{MSE Performance}
\label{MSE_performance}

Overall, the normalized $\mathrm{MSE/\sigma^2}$ decreases as the number of total users $K$ increases as depicted in Fig.~\ref{MSE-K_Nr}. The only exception is the ``Random selection'' method, which fluctuates at a consistently higher value. This demonstrates that the proposed methods effectively harness user diversity, whereas ``Random selection'' fails to do so.
When comparing the two subfigures in Fig.~\ref{MSE-K_Nr}, it becomes clear that $\mathrm{MSE/\sigma^2}$ values across all methods decrease as $N_r$ increases. This trend can be attributed to the additional degrees of freedom provided by a higher number of antennas at the aggregator, which enables better alignment of deviated signals.

Given the same number of $N_r$, ``Policy\_greedy'' achieves the best MSE performance, with ``Iterative'' slightly outperforming ``Policy'' among three proposed methods.
Additionally, although ``Subgradient'' achieves lower $\mathrm{MSE/\sigma^2}$ for small $K$, its performance deteriorates obviously as $K$ increases. This degradation stems from the rapidly expanding complexity of the solution space as $K$ increases, which complicates the optimization process. On the other hand, ``Policy\_greedy'' maintains a consistent advantage by effectively sampling and leveraging user diversity.
The ``Iterative reweighted'' method, while computationally efficient, is an approximate technique that incrementally approaches the optimum via coarse adjustments, thus demonstrating inferior performance.
The poorer performance of the RL approach is attributed to the large combined action space of beam steering and user selection, which dilutes gradient signals and hinders convergence. Hence, a slight rise in $\mathrm{MSE/\sigma^2}$ can be observed when $K$ becomes excessively large.

\begin{figure*}
	\centering \subfigure[]{\includegraphics[scale=0.464]{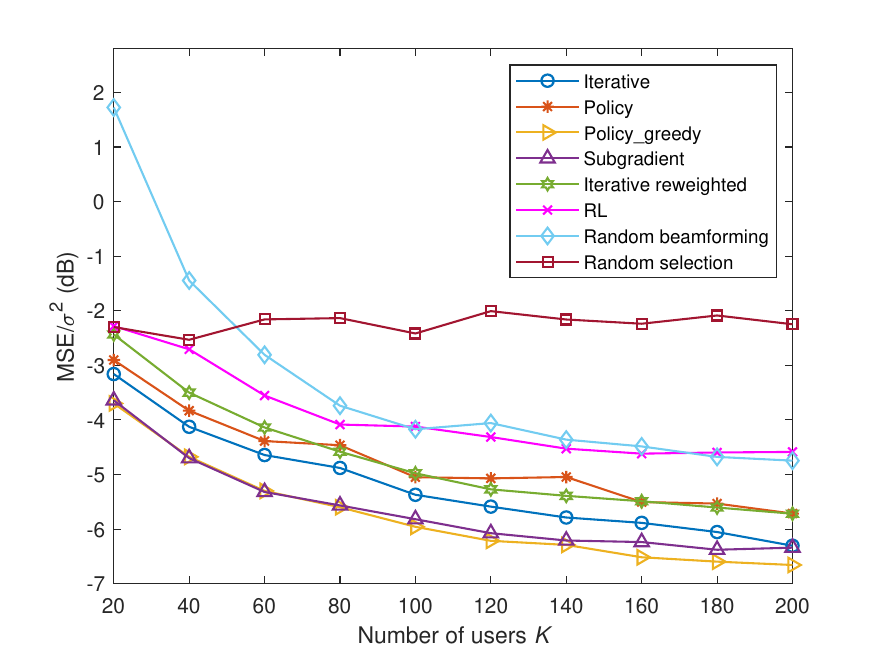}} \subfigure[]{\includegraphics[scale=0.463]{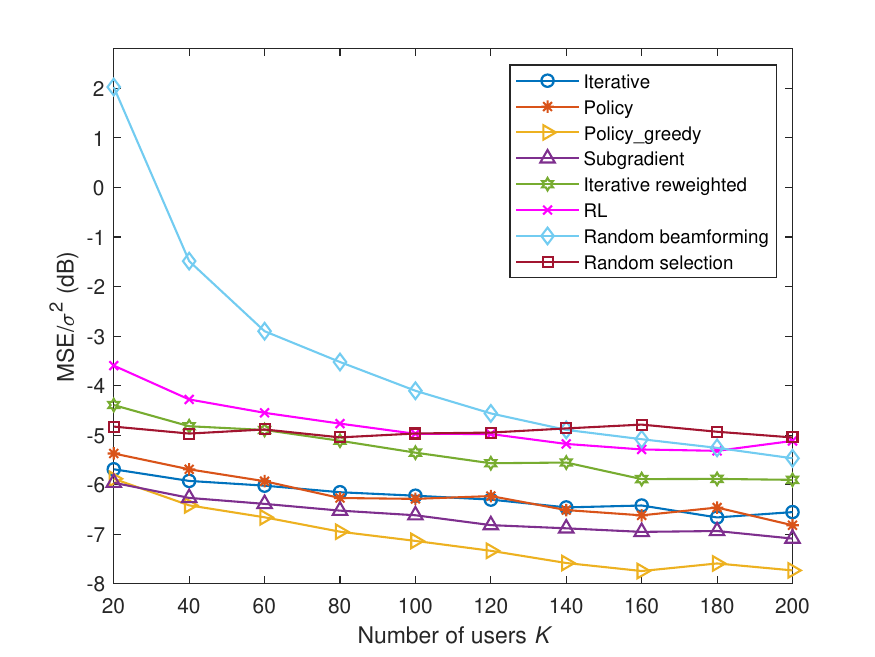} \label{MSE-K_Nr=16}}
	\caption{Aggregate error versus the number of users $K$. (a) $N_{r}=8$. (b) $N_{r}=16$.}
	\label{MSE-K_Nr}
\end{figure*}

\subsection{Cumulative Distribution Function of MSE}

To further examine the distribution of $\mathrm{MSE/\sigma^2}$, we plotted its cumulative distribution function (CDF) under four experimental settings where $N_r = 8$ and $K = [80, 120, 160, 200]$. For each setting, $50$ independent simulations were conducted. The results are presented in Fig.~\ref{CDF}.

Overall, the $\mathrm{MSE/\sigma^2}$ values achieved by the proposed methods are significantly lower than those of the baseline approaches. When $K=80$, the CDF curves of ``Subgradient'' and ``Policy\_greedy'' intersect, suggesting no clear advantage for either method at this stage. However, as $K$ increases, the CDF curve of ``Policy\_greedy'' progressively surpasses that of ``Subgradient'', reflecting its superior performance in handling larger user sets. Additionally, as $K$ grows, the mean and variance gradually decline for all methods except ``Random selection''. These observations align with the analysis in Section \ref{MSE_performance}, which provides further evidence of the robustness and effectiveness of the proposed methods.
	
\begin{figure*}
	\centering
	\subfigure[]{\includegraphics[scale = 0.464]{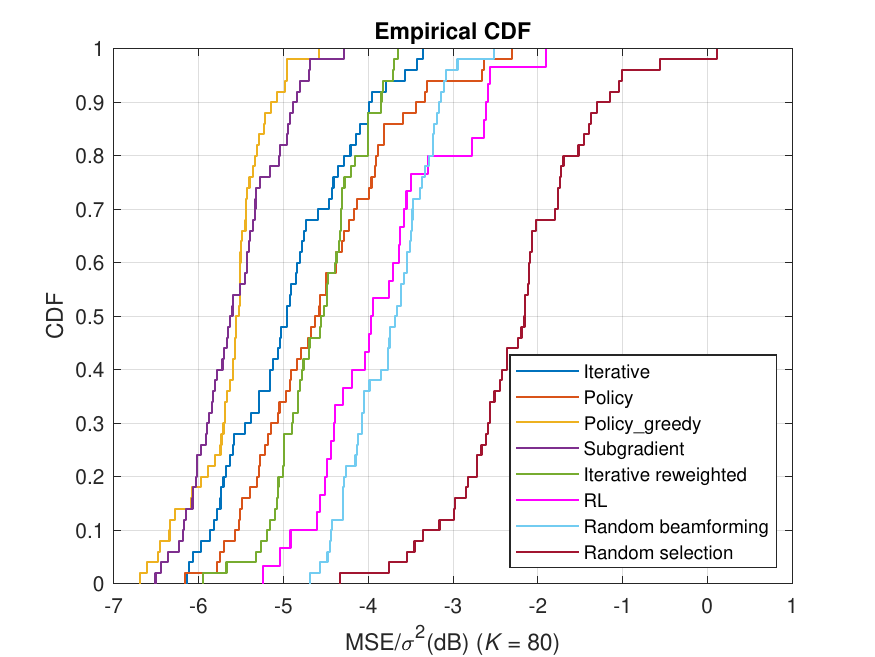}}
	\subfigure[]{\includegraphics[scale = 0.464]{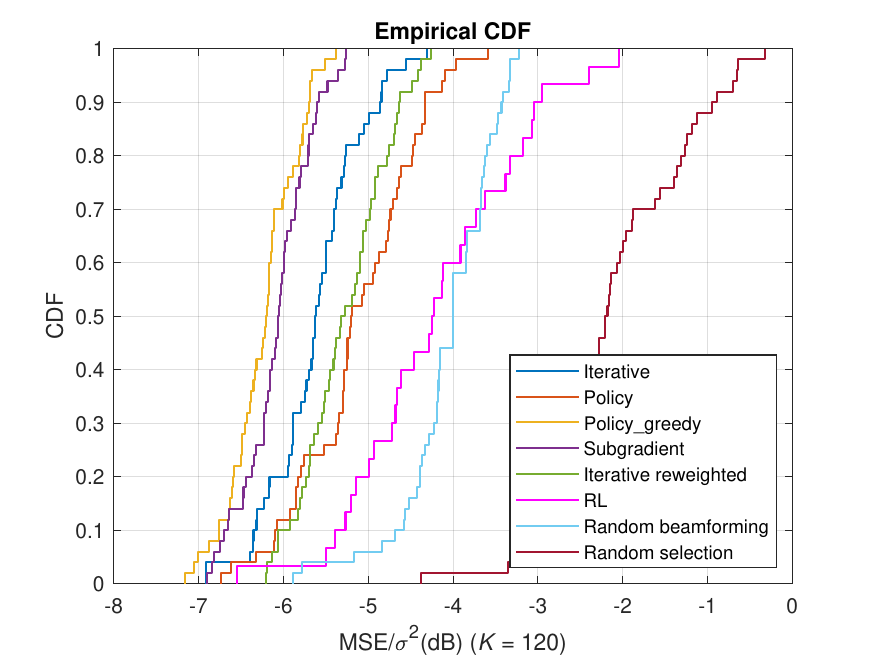}}
	\subfigure[]{\includegraphics[scale = 0.464]{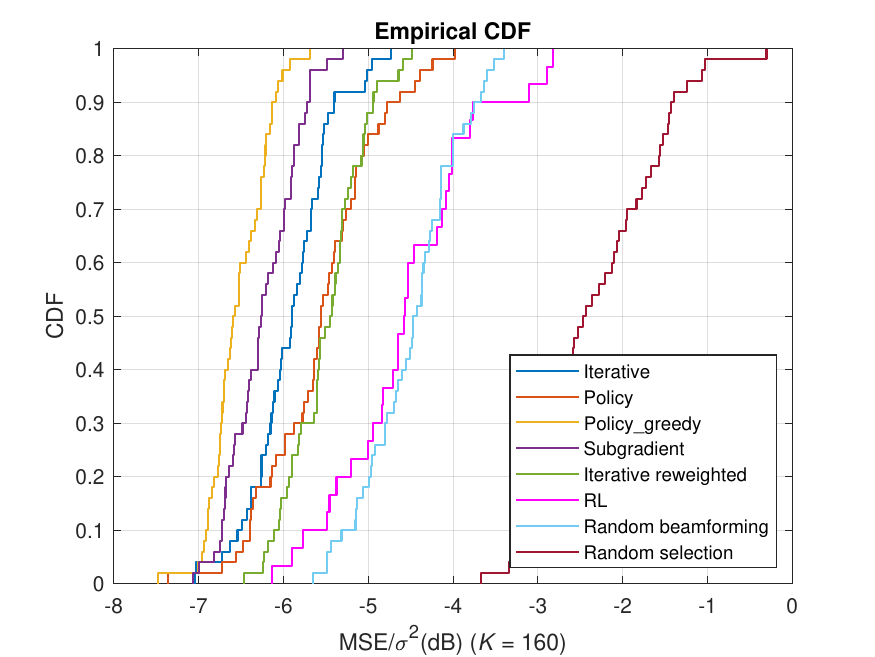}}
	\subfigure[]{\includegraphics[scale = 0.464]{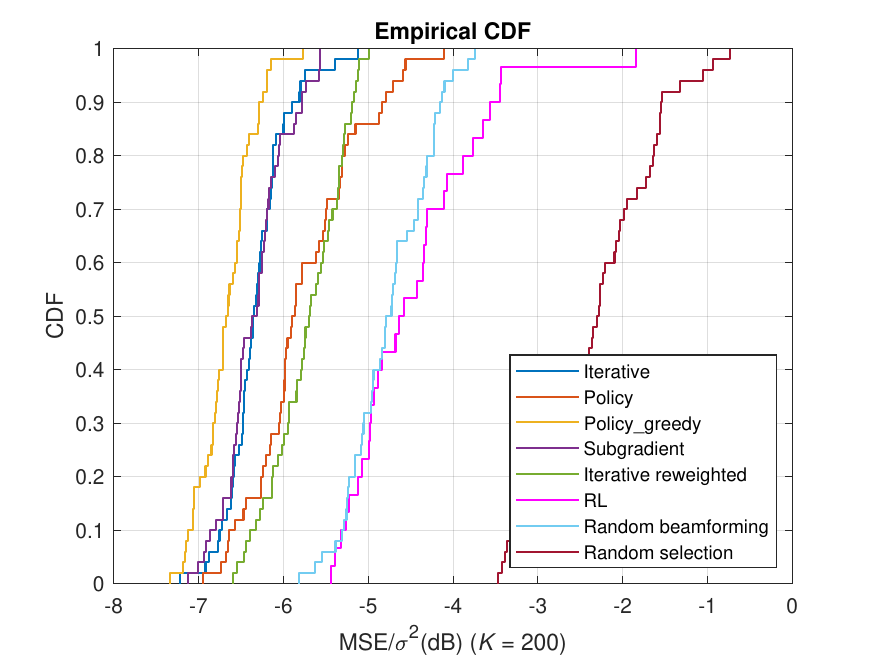}}
	\caption{CDF of $\mathrm{MSE/\sigma^2}$. (a) $K = 80$. (b) $K = 120$. (c) $K = 160$. (d) $K = 200$.}
	\label{CDF}
\end{figure*}
	
\subsection{Efficiency Performance}

In this subsection, we additionally provide the theoretical computational complexity and practical runtime for the proposed methods and baselines in Table. \ref{tab:efficiency}, where $T$ and $d$ denote the iterative epochs and embedding dimension.
Among three proposed methods, the ``Iterative'' method has notably higher computational complexity and execution time due to repeated iterations for solving $\mathbf{m}$. Meanwhile, the ``Policy\_greedy'' and ``Policy'' methods perform a single computation of $\mathbf{m}$, greatly reducing the overall computational burden. Since the ``Policy\_greedy'' selects multiple candidate user subsets, it consumes slightly more runtime than the ``Policy''. However, this additional time cost is minimal in large-scale user scenarios and is well justified by the significant improvement in MSE.
Besides, despite their reduced computational time, the ``Subgradient'' and ``Iterative reweighted'' and ``Random beamforming'' compromise on the MSE performance.
Therefore, ``Policy\_greedy'' provides superior MSE performance with reasonable computational overhead, thereby striking a favorable balance between MSE and efficiency performance.

Furthermore, we provide a theoretical comparison in terms of convergence rate.
In the joint beam steering and user scheduling context, the ``Subgradient'' method converges at a rate of $\mathcal{O}(1/\sqrt{T})$ \cite{subgradient2019}, and the ``Iterative reweighted'' method exhibits sublinear convergence rate \cite{daubechies2010iteratively}.
Although RL lacks theoretical convergence guarantees, it generally improves performance incrementally through iterations. However, when the action space is extremely large, such as the problem studied in this paper, the convergence speed is typically slow.
Besides, the convergence of two random methods is not guaranteed.
In comparison, according to Theorem \ref{the:bound}, the proposed DC algorithm achieves linear convergence rate, which implies that fewer iterations are needed to approach the optimum.

\begin{table}
	\centering
	\caption{Computational Complexity and Practical Runtime.}
	\resizebox{1.0\columnwidth}{!}{
		\begin{tabular}{c|cc||c|cc}
			\toprule
			Methods & Complexity & Runtime (s) & Methods & Complexity & Runtime (s) \\
			\midrule
			Iterative & $\mathcal{O}\left(T\left(N_r^{6}\log_{2}\frac{1}{\delta} + K\log(K)\right)\right)$ & 17.1688 & Iterative reweighted & $\mathcal{O}(N_rK)$ & 0.0027 \\
			Policy & $\mathcal{O}\left(N_r^{6}\log_{2}\frac{1}{\delta} + K^2\right)$ & 3.1786 & RL & $\mathcal{O}(T(N_rKd+SKd))$ & 840.123\\
			Policy\_greedy & $\mathcal{O}\left(N_r^{6}\log_{2}\frac{1}{\delta} + K^2\right)$ & 3.4501 & Random beamforming & $\mathcal{O}(N_r +K\log(K))$ & 0.0008 \\
			Subgradient & $\mathcal{O}(N_rK)$ & 0.0052 & Random selection & $\mathcal{O}\left(N_r^{6}\log_{2}\frac{1}{\delta}+K\right)$ & 12.4545 \\
			\bottomrule
	\end{tabular}}
	\label{tab:efficiency}
\end{table}
	
\subsection{Influence of $S$ on MSE Performance}

\begin{figure*}
	\centering
	\subfigure[]{\includegraphics[scale = 0.57]{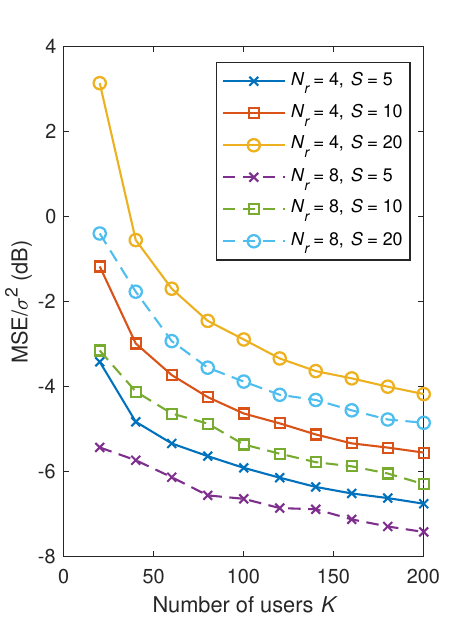}}
	\subfigure[]{\includegraphics[scale = 0.57]{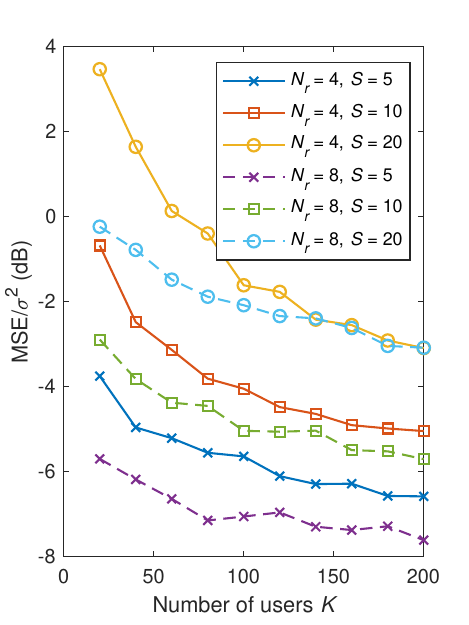}}
	\subfigure[]{\includegraphics[scale = 0.57]{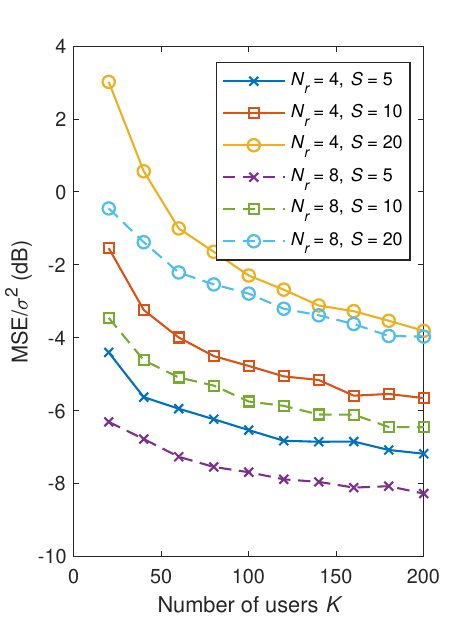}}
	\caption{Aggregate error versus the number of users $K$. (a) Iterative. (b) Policy. (c) Policy\_greedy.}
	\label{MSE-K_S}
\end{figure*}

In this subsection, we examine how varying the user subset size  $S$ affects MSE performance. Experiments were conducted with $S=[5, 10, 20]$ under two conditions $N_r=4$ and $N_r=8$. The results for the ``Iterative'', ``Policy'' and ``Policy\_greedy'' ($G=5$) methods are illustrated in Fig. \ref{MSE-K_S}.

These results show that, for a given $N_r$, increasing $S$ leads to higher $\mathrm{MSE/\sigma^2}$. This outcome highlights that involving more users in model updates introduces greater aggregation errors. Conversely, for a fixed $S$, increasing $N_r$ results in lower $\mathrm{MSE/\sigma^2}$, underscoring the advantage of equipping the aggregator with more antennas to enhance signal alignment. Note that, these results are also consistent with the trends observed in Section \ref{MSE_performance}.
	
\subsection{Influence of $G$ on MSE Performance}

\begin{figure*}
	\centering
	\subfigure[] {\includegraphics[scale=0.464]{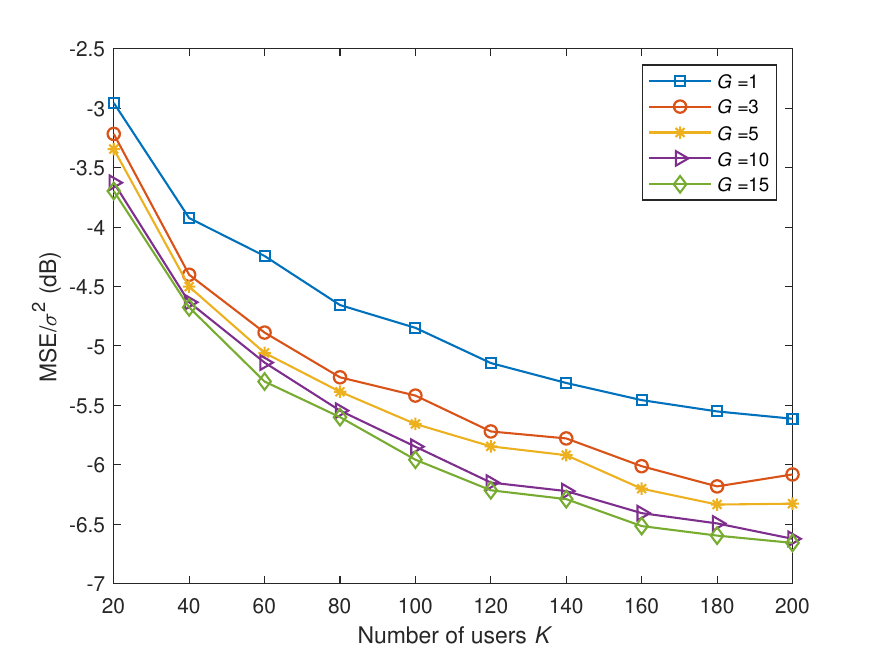}\label{influence_G_S10}}
	\subfigure[] {\includegraphics[scale=0.464]{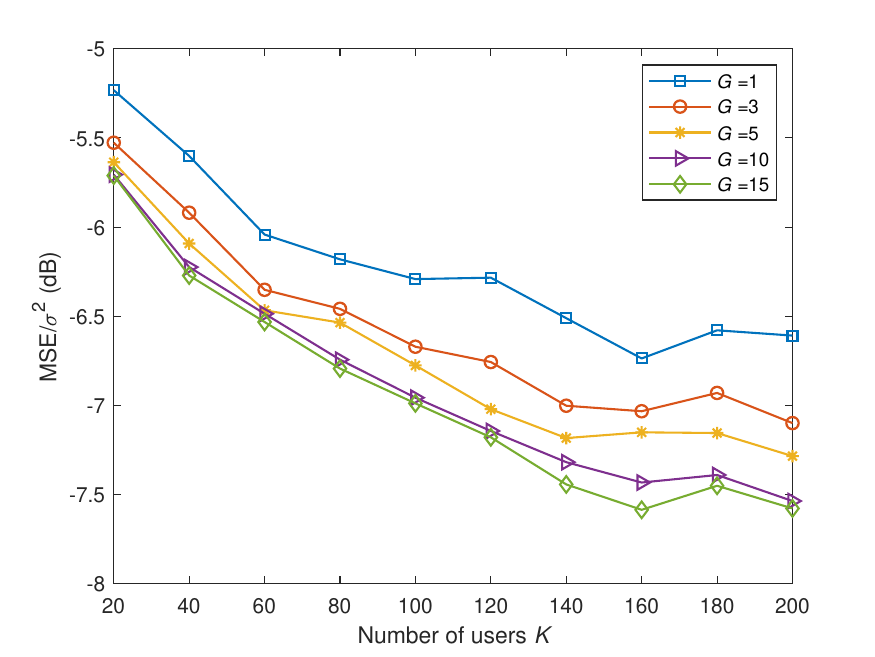}\label{influence_G_S20}}
	\caption{Aggregate error versus the number of users $K$. (a) $N_r=8$. (b) $N_r=16$. }
	\label{influence_G}
\end{figure*}

To evaluate the impact of the greedy number $G$ on the MSE performance, we conducted experiments with $G \in [1, 3, 5, 10, 15]$ for the ``Policy\_greedy'' method. It is noteworthy that the ``Policy'' method is equivalent to ``Policy\_greedy'' when $G=1$. The outcomes for $N_r=8$ and $N_r=16$ are presented in Fig. \ref{influence_G_S10} and Fig. \ref{influence_G_S20}, respectively.

The results in Fig.~\ref{influence_G} indicate that increasing $G$ reduces $\mathrm{MSE/\sigma^2}$, particularly when $G$ is small. However, as $G$ becomes sufficiently large, the marginal benefit diminishes. This trend suggests that selecting an appropriate $G$ value is crucial to balancing improved MSE performance against increased computational cost. The findings highlight that while larger $G$ values offer gains in accuracy, they also come with greater computational expenses. Therefore, a carefully chosen $G$ can optimize both performance and efficiency.
	
\subsection{Learning Performance}

\begin{figure*}
	\centering
	\subfigure[] {\includegraphics[scale=0.464]{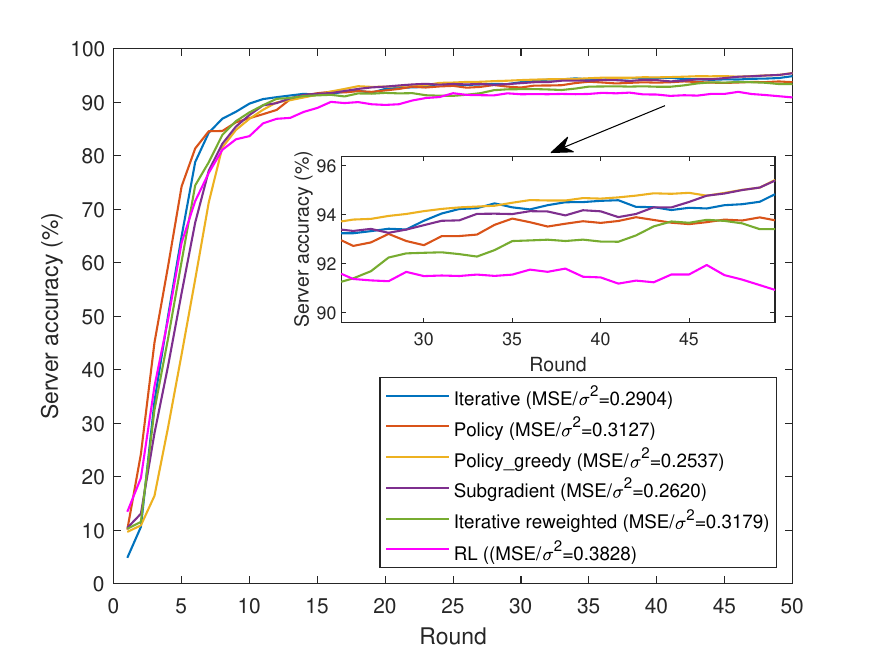}\label{iid}}
	\subfigure[] {\includegraphics[scale=0.464]{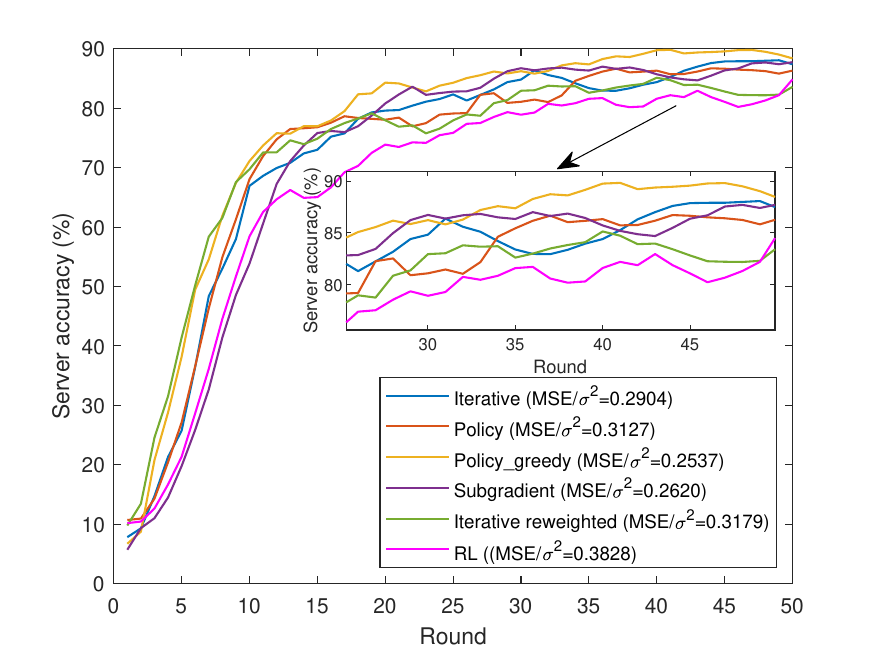}\label{niid}}
	\caption{Test accuracy. (a) i.i.d MNIST. (b) Non-i.i.d MNIST.}
	\label{learning_performance}
\end{figure*}
	
In this final subsection, we investigate the influence of MSE on learning performance. We applied the proposed methods to an FL framework and we trained a standard image classification task on the well-known MNIST dataset. MNIST, a widely recognized benchmark for image classification, consists of $70,000$ black-and-white handwritten digit images, with $60,000$ images allocated for training and $10,000$ for testing. For this task, we utilized the LeNet5 network architecture within the FedAvg framework. The mini-batch technique was employed with a batch size of $64$, while the learning rates were set to $0.1$ for each user and $0.5$ for the aggregator.

Simulations were carried out under a scenario where $N_r = 8$ were equipped at the aggregator, and $S = 10$ users were selected from a pool of $K = 100$ users. The training process was evaluated under both independent and identically distributed (i.i.d.) and non-independent and identically distributed (non-i.i.d.) data conditions among users. The non-i.i.d. scenario represents a more realistic setting, as local training data often depends on specific user behaviors, introducing various skews such as attribute skew, label skew, and temporal skew \cite{Zhao2022Non-iid}. Furthermore, theoretical and experimental analyses have shown that non-i.i.d. data significantly reduces the test accuracy of FL systems \cite{Gao2022Feddc}. Consequently, it is essential to validate that the proposed methods remain effective under non-i.i.d. conditions.

The simulation results reveal that the two random baseline methods performed significantly worse than the other approaches; thus, only the remaining seven methods were included in the analysis, which are ``Iterative'', ``Policy'', ``Policy\_greedy'', ``Subgradient'',  ``Iterative reweighted'', and ``RL''. Their average $\mathrm{MSE/\sigma^2}$ values are  $0.2904$, $0.3127$, $0.2537$, $0.2620$, $0.2620$, $0.3179$ and $0.3828$, respectively.
The corresponding server accuracy results for both i.i.d. and non-i.i.d. scenarios are presented in Fig.~\ref{learning_performance}\subref{iid} and Fig.~\ref{learning_performance}\subref{niid}. Under non-i.i.d. conditions, the fluctuations in server accuracy are more pronounced compared to the i.i.d. case, and the overall recognition accuracy is notably lower. Given the increased challenges posed by data heterogeneity, this observation aligns with our expectation. Additionally, the results demonstrate a clear relationship between $\mathrm{MSE/\sigma^2}$ and recognition accuracy: lower $\mathrm{MSE/\sigma^2}$ values correspond to higher recognition accuracy due to reduced transmission distortion. Among the evaluated methods, ``Policy\_greedy'' consistently achieved the best performance, which further highlights its robustness and effectiveness in both i.i.d. and non-i.i.d. scenarios.

\section{Conclusion}

In this article, we explore the deployment of FL within a large-scale wireless network to support diversified mobile IoT applications across a multitude of edge users.
While the adoption of AirComp has been identified as an approach to improve the communication efficiency, it inevitably produces aggregate error due to the fading and noisy channels.
Thus, we propose an iterative user scheduling and aggregate beam steering design to align the deviated signal.
Furthermore, to reduce the computational demands associated with the iterative process, we develop a low-complexity user scheduling policy.
Stemming from the analysis of channel characteristics, the user subset can be directed determined without iteration.
A greedy user scheduling policy is further proposed to approximate the optimal user subset.
Numerical simulations have confirmed that our proposed methods are superior to the benchmark in terms of both the MSE performance and the prediction accuracy on MNIST10 dataset.
	
\begin{acks}
This work was supported in part by the National Natural Science Foundation of China under Grant No.~62001103 and the Fundamental Research Funds for the Central Universities under Grant No.~2242023R40005.
\end{acks}
	
\bibliographystyle{ACM-Reference-Format}
\bibliography{reference}


\begin{thebibliography}{53}


\ifx \showCODEN    \undefined \def \showCODEN     #1{\unskip}     \fi
\ifx \showISBNx    \undefined \def \showISBNx     #1{\unskip}     \fi
\ifx \showISBNxiii \undefined \def \showISBNxiii  #1{\unskip}     \fi
\ifx \showISSN     \undefined \def \showISSN      #1{\unskip}     \fi
\ifx \showLCCN     \undefined \def \showLCCN      #1{\unskip}     \fi
\ifx \shownote     \undefined \def \shownote      #1{#1}          \fi
\ifx \showarticletitle \undefined \def \showarticletitle #1{#1}   \fi
\ifx \showURL      \undefined \def \showURL       {\relax}        \fi
\providecommand\bibfield[2]{#2}
\providecommand\bibinfo[2]{#2}
\providecommand\natexlab[1]{#1}
\providecommand\showeprint[2][]{arXiv:#2}

\bibitem[3gp(2015)]%
        {3gpp_tr45820}
 \bibinfo{year}{2015}\natexlab{}.
\newblock \bibinfo{booktitle}{\emph{{3GPP TR 45.820: Cellular System Support
  for Ultra-Low Complexity and Low Throughput Internet of Things (Release
  13)}}}.
\newblock \bibinfo{type}{{T}echnical {R}eport} TR 45.820.
  \bibinfo{institution}{3rd Generation Partnership Project (3GPP)}.
\newblock
\urldef\tempurl%
\url{https://portal.3gpp.org/desktopmodules/Specifications/SpecificationDetails.aspx?specificationId=2702}
\showURL{%
\tempurl}
\newblock
\shownote{Version 13.1.0}.


\bibitem[Abbaszadehpeivasti et~al\mbox{.}(2024)]%
        {abbaszadehpeivasti2024rate}
\bibfield{author}{\bibinfo{person}{Hadi Abbaszadehpeivasti},
  \bibinfo{person}{Etienne de Klerk}, {and} \bibinfo{person}{Moslem Zamani}.}
  \bibinfo{year}{2024}\natexlab{}.
\newblock \showarticletitle{On the rate of convergence of the
  difference-of-convex algorithm {(DCA)}}.
\newblock \bibinfo{journal}{\emph{J. Optim. Theory Appl.}}
  \bibinfo{volume}{202}, \bibinfo{number}{1} (\bibinfo{year}{2024}),
  \bibinfo{pages}{475--496}.
\newblock


\bibitem[Amiri and G{\"u}nd{\"u}z(2020)]%
        {amiri2020federated}
\bibfield{author}{\bibinfo{person}{Mohammad~Mohammadi Amiri} {and}
  \bibinfo{person}{Deniz G{\"u}nd{\"u}z}.} \bibinfo{year}{2020}\natexlab{}.
\newblock \showarticletitle{Federated learning over wireless fading channels}.
\newblock \bibinfo{journal}{\emph{IEEE Trans. Wireless Commun.}}
  \bibinfo{volume}{19}, \bibinfo{number}{5} (\bibinfo{date}{May}
  \bibinfo{year}{2020}), \bibinfo{pages}{3546--3557}.
\newblock


\bibitem[Bian et~al\mbox{.}(2022)]%
        {bian2022machine}
\bibfield{author}{\bibinfo{person}{Jiang Bian}, \bibinfo{person}{Abdullah~Al
  Arafat}, \bibinfo{person}{Haoyi Xiong}, \bibinfo{person}{Jing Li},
  \bibinfo{person}{Li Li}, \bibinfo{person}{Hongyang Chen},
  \bibinfo{person}{Jun Wang}, \bibinfo{person}{Dejing Dou}, {and}
  \bibinfo{person}{Zhishan Guo}.} \bibinfo{year}{2022}\natexlab{}.
\newblock \showarticletitle{Machine learning in real-time Internet of things
  (IoT) systems: A survey}.
\newblock \bibinfo{journal}{\emph{IEEE Internet Things J.}}
  \bibinfo{volume}{9}, \bibinfo{number}{11} (\bibinfo{date}{June}
  \bibinfo{year}{2022}), \bibinfo{pages}{8364--8386}.
\newblock


\bibitem[Can and Ersoy(2021)]%
        {can2021privacy}
\bibfield{author}{\bibinfo{person}{Yekta~Said Can} {and} \bibinfo{person}{Cem
  Ersoy}.} \bibinfo{year}{2021}\natexlab{}.
\newblock \showarticletitle{Privacy-preserving federated deep learning for
  wearable IoT-based biomedical monitoring}.
\newblock \bibinfo{journal}{\emph{ACM Trans. Internet Technol.}}
  \bibinfo{volume}{21}, \bibinfo{number}{1}, Article \bibinfo{articleno}{21}
  (\bibinfo{date}{Jan.} \bibinfo{year}{2021}), \bibinfo{numpages}{17}~pages.
\newblock
\showISSN{1533--5399}


\bibitem[Cao et~al\mbox{.}(2024)]%
        {Cao2024Overview}
\bibfield{author}{\bibinfo{person}{Xiaowen Cao}, \bibinfo{person}{Zhonghao
  Lyu}, \bibinfo{person}{Guangxu Zhu}, \bibinfo{person}{Jie Xu},
  \bibinfo{person}{Lexi Xu}, {and} \bibinfo{person}{Shuguang Cui}.}
  \bibinfo{year}{2024}\natexlab{}.
\newblock \showarticletitle{An overview on over-the-air federated edge
  learning}.
\newblock \bibinfo{journal}{\emph{IEEE Wireless Commun.}} \bibinfo{volume}{31},
  \bibinfo{number}{3} (\bibinfo{date}{June} \bibinfo{year}{2024}),
  \bibinfo{pages}{202--210}.
\newblock


\bibitem[Chen et~al\mbox{.}(2020)]%
        {chen2020joint}
\bibfield{author}{\bibinfo{person}{Mingzhe Chen}, \bibinfo{person}{Zhaohui
  Yang}, \bibinfo{person}{Walid Saad}, \bibinfo{person}{Changchuan Yin},
  \bibinfo{person}{H~Vincent Poor}, {and} \bibinfo{person}{Shuguang Cui}.}
  \bibinfo{year}{2020}\natexlab{}.
\newblock \showarticletitle{A joint learning and communications framework for
  federated learning over wireless networks}.
\newblock \bibinfo{journal}{\emph{IEEE Trans. Wireless Commun.}}
  \bibinfo{volume}{20}, \bibinfo{number}{1} (\bibinfo{date}{Jan.}
  \bibinfo{year}{2020}), \bibinfo{pages}{269--283}.
\newblock


\bibitem[Chen et~al\mbox{.}(2024)]%
        {Chen2024Minimax}
\bibfield{author}{\bibinfo{person}{Shuxiao Chen}, \bibinfo{person}{Qinqing
  Zheng}, \bibinfo{person}{Qi Long}, {and} \bibinfo{person}{Weijie~J. Su}.}
  \bibinfo{year}{2024}\natexlab{}.
\newblock \showarticletitle{Minimax estimation for personalized federated
  learning: An alternative between FedAvg and local training?}
\newblock \bibinfo{journal}{\emph{J. Mach. Learn. Res.}} \bibinfo{volume}{24},
  \bibinfo{number}{1} (\bibinfo{date}{Mar.} \bibinfo{year}{2024}),
  \bibinfo{numpages}{59}~pages.
\newblock
\showISSN{1532-4435}


\bibitem[Daubechies et~al\mbox{.}(2010)]%
        {daubechies2010iteratively}
\bibfield{author}{\bibinfo{person}{Ingrid Daubechies}, \bibinfo{person}{Ronald
  DeVore}, \bibinfo{person}{Massimo Fornasier}, {and} \bibinfo{person}{C~Sinan
  G{\"u}nt{\"u}rk}.} \bibinfo{year}{2010}\natexlab{}.
\newblock \showarticletitle{Iteratively reweighted least squares minimization
  for sparse recovery}.
\newblock \bibinfo{journal}{\emph{Commun. Pure Appl. Math.}}
  \bibinfo{volume}{63}, \bibinfo{number}{1} (\bibinfo{year}{2010}),
  \bibinfo{pages}{1--38}.
\newblock


\bibitem[Du et~al\mbox{.}(2023)]%
        {Du2023Gradient}
\bibfield{author}{\bibinfo{person}{Jun Du}, \bibinfo{person}{Bingqing Jiang},
  \bibinfo{person}{Chunxiao Jiang}, \bibinfo{person}{Yuanming Shi}, {and}
  \bibinfo{person}{Zhu Han}.} \bibinfo{year}{2023}\natexlab{}.
\newblock \showarticletitle{Gradient and channel aware dynamic scheduling for
  over-the-air computation in federated edge learning systems}.
\newblock \bibinfo{journal}{\emph{IEEE J. Sel. Areas Commun.}}
  \bibinfo{volume}{41}, \bibinfo{number}{4} (\bibinfo{date}{Apr.}
  \bibinfo{year}{2023}), \bibinfo{pages}{1035--1050}.
\newblock


\bibitem[Gao et~al\mbox{.}(2022)]%
        {Gao2022Feddc}
\bibfield{author}{\bibinfo{person}{Liang Gao}, \bibinfo{person}{Huazhu Fu},
  \bibinfo{person}{Li Li}, \bibinfo{person}{Yingwen Chen},
  \bibinfo{person}{Ming Xu}, {and} \bibinfo{person}{Cheng-Zhong Xu}.}
  \bibinfo{year}{2022}\natexlab{}.
\newblock \showarticletitle{FedDC: Federated Learning with Non-IID Data via
  Local Drift Decoupling and Correction}. In \bibinfo{booktitle}{\emph{Proc.
  IEEE Comput. Soc. Conf. Comput. Vision Pattern Recognit. (CVPR)}}.
  \bibinfo{pages}{10102--10111}.
\newblock


\bibitem[Grimmer(2019)]%
        {subgradient2019}
\bibfield{author}{\bibinfo{person}{Benjamin Grimmer}.}
  \bibinfo{year}{2019}\natexlab{}.
\newblock \showarticletitle{Convergence Rates for Deterministic and Stochastic
  Subgradient Methods without {Lipschitz} Continuity}.
\newblock \bibinfo{journal}{\emph{SIAM J. Optim.}} \bibinfo{volume}{29},
  \bibinfo{number}{2} (\bibinfo{year}{2019}), \bibinfo{pages}{1350--1365}.
\newblock


\bibitem[Gutin and Holloway(2004)]%
        {gutin2004independent}
\bibfield{author}{\bibinfo{person}{Gregory Gutin} {and} \bibinfo{person}{Royal
  Holloway}.} \bibinfo{year}{2004}\natexlab{}.
\newblock \showarticletitle{5.3 Independent sets and cliques}.
\newblock \bibinfo{journal}{\emph{Handbook of graph theory}}
  (\bibinfo{year}{2004}), \bibinfo{pages}{389--402}.
\newblock


\bibitem[Haddadpour et~al\mbox{.}(2021)]%
        {haddadpour2021federated}
\bibfield{author}{\bibinfo{person}{Farzin Haddadpour},
  \bibinfo{person}{Mohammad~Mahdi Kamani}, \bibinfo{person}{Aryan Mokhtari},
  {and} \bibinfo{person}{Mehrdad Mahdavi}.} \bibinfo{year}{2021}\natexlab{}.
\newblock \showarticletitle{Federated learning with compression: Unified
  analysis and sharp guarantees}. In \bibinfo{booktitle}{\emph{International
  Conference on Artificial Intelligence and Statistics}}. PMLR,
  \bibinfo{pages}{2350--2358}.
\newblock


\bibitem[Hamer et~al\mbox{.}(2020)]%
        {Hamer2020Fedboost}
\bibfield{author}{\bibinfo{person}{Jenny Hamer}, \bibinfo{person}{Mehryar
  Mohri}, {and} \bibinfo{person}{Ananda~Theertha Suresh}.}
  \bibinfo{year}{2020}\natexlab{}.
\newblock \showarticletitle{{F}ed{B}oost: A Communication-Efficient Algorithm
  for Federated Learning}. In \bibinfo{booktitle}{\emph{Proc. 37th Int. Conf.
  Machin. Learn. (ICML)}}, Vol.~\bibinfo{volume}{119}.
  \bibinfo{pages}{3973--3983}.
\newblock


\bibitem[Hellström et~al\mbox{.}(2023)]%
        {Hell2023Federated}
\bibfield{author}{\bibinfo{person}{Henrik Hellström},
  \bibinfo{person}{Viktoria Fodor}, {and} \bibinfo{person}{Carlo Fischione}.}
  \bibinfo{year}{2023}\natexlab{}.
\newblock \showarticletitle{Federated learning over-the-air by
  retransmissions}.
\newblock \bibinfo{journal}{\emph{IEEE Trans. Wireless Commun.}}
  \bibinfo{volume}{22}, \bibinfo{number}{12} (\bibinfo{date}{Dec.}
  \bibinfo{year}{2023}), \bibinfo{pages}{9143--9156}.
\newblock


\bibitem[Kim et~al\mbox{.}(2023)]%
        {kim2023}
\bibfield{author}{\bibinfo{person}{Minsik Kim}, \bibinfo{person}{A.~Lee
  Swindlehurst}, {and} \bibinfo{person}{Daeyoung Park}.}
  \bibinfo{year}{2023}\natexlab{}.
\newblock \showarticletitle{Beamforming vector design and device selection in
  over-the-air federated learning}.
\newblock \bibinfo{journal}{\emph{IEEE Trans. Wireless Commun.}}
  \bibinfo{volume}{22}, \bibinfo{number}{11} (\bibinfo{date}{Nov.}
  \bibinfo{year}{2023}), \bibinfo{pages}{7464--7477}.
\newblock


\bibitem[Li et~al\mbox{.}(2024)]%
        {li2024antenna}
\bibfield{author}{\bibinfo{person}{Hongtao Li}, \bibinfo{person}{Longyao Ran},
  \bibinfo{person}{Shengyao Chen}, \bibinfo{person}{Zhiyong Cheng}, {and}
  \bibinfo{person}{Feng Xi}.} \bibinfo{year}{2024}\natexlab{}.
\newblock \showarticletitle{Antenna Selection and Receive Beamforming for
  Multi-Functional Sparse Linear Array via Consensus {ADMM}}.
\newblock \bibinfo{journal}{\emph{IEEE Trans. Veh. Technol.}}
  \bibinfo{volume}{73}, \bibinfo{number}{7} (\bibinfo{date}{July}
  \bibinfo{year}{2024}), \bibinfo{pages}{10435--10450}.
\newblock


\bibitem[Li et~al\mbox{.}(2023)]%
        {Li2023Dynamic}
\bibfield{author}{\bibinfo{person}{Yang Li}, \bibinfo{person}{Yuan Wu},
  \bibinfo{person}{Yuxiao Song}, \bibinfo{person}{Liping Qian}, {and}
  \bibinfo{person}{Weijia Jia}.} \bibinfo{year}{2023}\natexlab{}.
\newblock \showarticletitle{Dynamic user-scheduling and power allocation for
  SWIPT aided federated learning: A deep learning approach}.
\newblock \bibinfo{journal}{\emph{IEEE Trans. Mob. Comput.}}
  \bibinfo{volume}{22}, \bibinfo{number}{12} (\bibinfo{date}{Dec.}
  \bibinfo{year}{2023}), \bibinfo{pages}{6956--6969}.
\newblock


\bibitem[Lin et~al\mbox{.}(2022)]%
        {9520774}
\bibfield{author}{\bibinfo{person}{Zhenyi Lin}, \bibinfo{person}{Xiaoyang Li},
  \bibinfo{person}{Vincent K.~N. Lau}, \bibinfo{person}{Yi Gong}, {and}
  \bibinfo{person}{Kaibin Huang}.} \bibinfo{year}{2022}\natexlab{}.
\newblock \showarticletitle{Deploying federated learning in large-scale
  cellular networks: Spatial convergence analysis}.
\newblock \bibinfo{journal}{\emph{IEEE Trans. Wireless Commun.}}
  (\bibinfo{date}{Mar.} \bibinfo{year}{2022}).
\newblock


\bibitem[Liu et~al\mbox{.}(2022)]%
        {liu2022fedcpf}
\bibfield{author}{\bibinfo{person}{Su Liu}, \bibinfo{person}{Jiong Yu},
  \bibinfo{person}{Xiaoheng Deng}, {and} \bibinfo{person}{Shaohua Wan}.}
  \bibinfo{year}{2022}\natexlab{}.
\newblock \showarticletitle{FedCPF: An efficient-communication federated
  learning approach for vehicular edge computing in 6G communication networks}.
\newblock \bibinfo{journal}{\emph{IEEE Trans. Intell. Transp. Syst.}}
  \bibinfo{volume}{23}, \bibinfo{number}{2} (\bibinfo{date}{Feb.}
  \bibinfo{year}{2022}), \bibinfo{pages}{1616--1629}.
\newblock


\bibitem[Lu et~al\mbox{.}(2016)]%
        {lu2016nonconvex}
\bibfield{author}{\bibinfo{person}{Canyi Lu}, \bibinfo{person}{Jinhui Tang},
  \bibinfo{person}{Shuicheng Yan}, {and} \bibinfo{person}{Zhouchen Lin}.}
  \bibinfo{year}{2016}\natexlab{}.
\newblock \showarticletitle{Nonconvex Nonsmooth Low Rank Minimization via
  Iteratively Reweighted Nuclear Norm}.
\newblock \bibinfo{journal}{\emph{IEEE Trans. Image Process.}}
  \bibinfo{volume}{25}, \bibinfo{number}{2} (\bibinfo{date}{Feb.}
  \bibinfo{year}{2016}), \bibinfo{pages}{829--839}.
\newblock


\bibitem[Luo et~al\mbox{.}(2007)]%
        {luo2007approximation}
\bibfield{author}{\bibinfo{person}{Zhi‐Quan Luo},
  \bibinfo{person}{Nicholas~D. Sidiropoulos}, \bibinfo{person}{Paul Tseng},
  {and} \bibinfo{person}{Shuzhong Zhang}.} \bibinfo{year}{2007}\natexlab{}.
\newblock \showarticletitle{Approximation Bounds for Quadratic Optimization
  with Homogeneous Quadratic Constraints}.
\newblock \bibinfo{journal}{\emph{SIAM J. Optim.}} \bibinfo{volume}{18},
  \bibinfo{number}{1} (\bibinfo{date}{Feb.} \bibinfo{year}{2007}),
  \bibinfo{pages}{1--28}.
\newblock


\bibitem[Media(2022)]%
        {Kids}
\bibfield{author}{\bibinfo{person}{Common~Sense Media}.}
  \bibinfo{year}{2022}\natexlab{}.
\newblock \bibinfo{title}{Kids and the Metaverse: What parents, policymakers,
  and companies need to know}.
\newblock
\urldef\tempurl%
\url{https://www.commonsensemedia.org/sites/default/files/featured-content/files/metaverse-white-paper.pdf}
\showURL{%
\tempurl}


\bibitem[Mohammadi~Amiri and Gündüz(2020)]%
        {9042352}
\bibfield{author}{\bibinfo{person}{Mohammad Mohammadi~Amiri} {and}
  \bibinfo{person}{Deniz Gündüz}.} \bibinfo{year}{2020}\natexlab{}.
\newblock \showarticletitle{Machine learning at the wireless edge: Distributed
  stochastic gradient descent over-the-air}.
\newblock \bibinfo{journal}{\emph{IEEE Trans. Signal Process.}}
  \bibinfo{volume}{68} (\bibinfo{date}{Mar.} \bibinfo{year}{2020}),
  \bibinfo{pages}{2155--2169}.
\newblock


\bibitem[Nguyen and Thai(2023)]%
        {Nguyen2023Preserving}
\bibfield{author}{\bibinfo{person}{Truc Nguyen} {and} \bibinfo{person}{My~T.
  Thai}.} \bibinfo{year}{2023}\natexlab{}.
\newblock \showarticletitle{Preserving privacy and security in federated
  learning}.
\newblock \bibinfo{journal}{\emph{IEEE/ACM Trans. Netw.}} \bibinfo{volume}{32},
  \bibinfo{number}{1} (\bibinfo{date}{Aug.} \bibinfo{year}{2023}),
  \bibinfo{pages}{833--843}.
\newblock
\showISSN{1063-6692}


\bibitem[Qiao et~al\mbox{.}(2023)]%
        {Qiao2023Communication}
\bibfield{author}{\bibinfo{person}{Jing Qiao}, \bibinfo{person}{Shikun Shen},
  \bibinfo{person}{Shuzhen Chen}, \bibinfo{person}{Xiao Zhang},
  \bibinfo{person}{Tian Lan}, \bibinfo{person}{Xiuzhen Cheng}, {and}
  \bibinfo{person}{Dongxiao Yu}.} \bibinfo{year}{2023}\natexlab{}.
\newblock \showarticletitle{Communication resources limited decentralized
  learning with privacy guarantee through over-the-air computation}. In
  \bibinfo{booktitle}{\emph{Proc. 24th Int. Symp. Mobile Ad Hoc Networking
  Comput.}} \emph{(\bibinfo{series}{MobiHoc '23})}. \bibinfo{pages}{201--210}.
\newblock
\showISBNx{9781450399265}


\bibitem[Sery et~al\mbox{.}(2020)]%
        {9322580}
\bibfield{author}{\bibinfo{person}{Tomer Sery}, \bibinfo{person}{Nir
  Shlezinger}, \bibinfo{person}{Kobi Cohen}, {and} \bibinfo{person}{Yonina~C.
  Eldar}.} \bibinfo{year}{2020}\natexlab{}.
\newblock \showarticletitle{{COTAF}: Convergent over-the-air federated
  learning}. In \bibinfo{booktitle}{\emph{Proc. IEEE Glob. Commun. Conf.
  (GLOBECOM)}}. \bibinfo{pages}{1--6}.
\newblock


\bibitem[Shah and Lau(2023)]%
        {shah2023model}
\bibfield{author}{\bibinfo{person}{Suhail~Mohmad Shah} {and}
  \bibinfo{person}{Vincent K.~N. Lau}.} \bibinfo{year}{2023}\natexlab{}.
\newblock \showarticletitle{Model compression for communication efficient
  federated learning}.
\newblock \bibinfo{journal}{\emph{IEEE Trans. Neural Networks Learn. Syst.}}
  \bibinfo{volume}{34}, \bibinfo{number}{9} (\bibinfo{date}{Sept.}
  \bibinfo{year}{2023}), \bibinfo{pages}{5937--5951}.
\newblock


\bibitem[{Shi} et~al\mbox{.}(2020)]%
        {fast_convergence}
\bibfield{author}{\bibinfo{person}{W. {Shi}}, \bibinfo{person}{S. {Zhou}},
  {and} \bibinfo{person}{Z. {Niu}}.} \bibinfo{year}{2020}\natexlab{}.
\newblock \showarticletitle{Device scheduling with fast convergence for
  wireless federated learning}. In \bibinfo{booktitle}{\emph{Proc. IEEE Int.
  Conf. Commun. (ICC)}}. \bibinfo{pages}{1--6}.
\newblock


\bibitem[Stich(2019)]%
        {Sebastian2019Local}
\bibfield{author}{\bibinfo{person}{Sebastian~U. Stich}.}
  \bibinfo{year}{2019}\natexlab{}.
\newblock \showarticletitle{Local {SGD} converges fast and communicates
  little}. In \bibinfo{booktitle}{\emph{Proc. 7th Int. Conf. Learn. Represent.
  (ICLR)}}. \bibinfo{pages}{1--17}.
\newblock


\bibitem[Sultana et~al\mbox{.}(2022)]%
        {sultana2022eiffel}
\bibfield{author}{\bibinfo{person}{Abeda Sultana}, \bibinfo{person}{Md.~Mainul
  Haque}, \bibinfo{person}{Li Chen}, \bibinfo{person}{Fei Xu}, {and}
  \bibinfo{person}{Xu Yuan}.} \bibinfo{year}{2022}\natexlab{}.
\newblock \showarticletitle{Eiffel: Efficient and fair scheduling in adaptive
  federated learning}.
\newblock \bibinfo{journal}{\emph{IEEE Trans. Parallel Distrib. Syst.}}
  \bibinfo{volume}{33}, \bibinfo{number}{12} (\bibinfo{date}{Dec.}
  \bibinfo{year}{2022}), \bibinfo{pages}{4282--4294}.
\newblock


\bibitem[Wang et~al\mbox{.}(2024)]%
        {Wang2024Joint}
\bibfield{author}{\bibinfo{person}{Juncheng Wang}, \bibinfo{person}{Ben Liang},
  \bibinfo{person}{Min Dong}, \bibinfo{person}{Gary Boudreau}, {and}
  \bibinfo{person}{Hatem Abou-Zeid}.} \bibinfo{year}{2024}\natexlab{}.
\newblock \showarticletitle{Joint online optimization of model training and
  analog aggregation for wireless edge learning}.
\newblock \bibinfo{journal}{\emph{IEEE/ACM Trans. Networking}}
  \bibinfo{volume}{32}, \bibinfo{number}{2} (\bibinfo{date}{Apr.}
  \bibinfo{year}{2024}), \bibinfo{pages}{1212--1228}.
\newblock


\bibitem[Wang et~al\mbox{.}(2020)]%
        {wang2020preserving}
\bibfield{author}{\bibinfo{person}{Tian Wang}, \bibinfo{person}{Md~Zakirul~Alam
  Bhuiyan}, \bibinfo{person}{Guojun Wang}, \bibinfo{person}{Lianyong Qi},
  \bibinfo{person}{Jie Wu}, {and} \bibinfo{person}{Thaier Hayajneh}.}
  \bibinfo{year}{2020}\natexlab{}.
\newblock \showarticletitle{Preserving balance between privacy and data
  integrity in edge-assisted Internet of things}.
\newblock \bibinfo{journal}{\emph{IEEE Internet Things J.}}
  \bibinfo{volume}{7}, \bibinfo{number}{4} (\bibinfo{date}{Apr.}
  \bibinfo{year}{2020}), \bibinfo{pages}{2679--2689}.
\newblock


\bibitem[Wei et~al\mbox{.}(2024)]%
        {wei2024demystifying}
\bibfield{author}{\bibinfo{person}{Wenqi Wei}, \bibinfo{person}{Ka-Ho Chow},
  \bibinfo{person}{Yanzhao Wu}, {and} \bibinfo{person}{Ling Liu}.}
  \bibinfo{year}{2024}\natexlab{}.
\newblock \showarticletitle{Demystifying data poisoning attacks in distributed
  learning as a service}.
\newblock \bibinfo{journal}{\emph{IEEE Trans. Serv. Comput.}}
  \bibinfo{volume}{17}, \bibinfo{number}{1} (\bibinfo{date}{Jan.}
  \bibinfo{year}{2024}), \bibinfo{pages}{237--250}.
\newblock


\bibitem[Wu et~al\mbox{.}(2021)]%
        {wu2021fedscr}
\bibfield{author}{\bibinfo{person}{Xueyu Wu}, \bibinfo{person}{Xin Yao}, {and}
  \bibinfo{person}{Cho-Li Wang}.} \bibinfo{year}{2021}\natexlab{}.
\newblock \showarticletitle{FedSCR: Structure-based communication reduction for
  federated learning}.
\newblock \bibinfo{journal}{\emph{IEEE Trans. Parallel Distrib. Syst.}}
  \bibinfo{volume}{32}, \bibinfo{number}{7} (\bibinfo{date}{July}
  \bibinfo{year}{2021}), \bibinfo{pages}{1565--1577}.
\newblock


\bibitem[Xia et~al\mbox{.}(2020)]%
        {xia2020multi}
\bibfield{author}{\bibinfo{person}{Wenchao Xia}, \bibinfo{person}{Tony~QS
  Quek}, \bibinfo{person}{Kun Guo}, \bibinfo{person}{Wanli Wen},
  \bibinfo{person}{Howard~H Yang}, {and} \bibinfo{person}{Hongbo Zhu}.}
  \bibinfo{year}{2020}\natexlab{}.
\newblock \showarticletitle{Multi-armed bandit-based client scheduling for
  federated learning}.
\newblock \bibinfo{journal}{\emph{IEEE Trans. Wireless Commun.}}
  \bibinfo{volume}{19}, \bibinfo{number}{11} (\bibinfo{date}{Nov.}
  \bibinfo{year}{2020}), \bibinfo{pages}{7108--7123}.
\newblock


\bibitem[Xia et~al\mbox{.}(2021)]%
        {Xia2021Wireless}
\bibfield{author}{\bibinfo{person}{Wenchao Xia}, \bibinfo{person}{Wanli Wen},
  \bibinfo{person}{Kai-Kit Wong}, \bibinfo{person}{Tony~Q.S. Quek},
  \bibinfo{person}{Jun Zhang}, {and} \bibinfo{person}{Hongbo Zhu}.}
  \bibinfo{year}{2021}\natexlab{}.
\newblock \showarticletitle{Federated-learning-based client scheduling for
  low-latency wireless communications}.
\newblock \bibinfo{journal}{\emph{IEEE Wireless Commun.}} \bibinfo{volume}{28},
  \bibinfo{number}{2} (\bibinfo{date}{Apr.} \bibinfo{year}{2021}),
  \bibinfo{pages}{32--38}.
\newblock


\bibitem[Xu et~al\mbox{.}(2024)]%
        {Xu2024Random}
\bibfield{author}{\bibinfo{person}{Chunmei Xu}, \bibinfo{person}{Shengheng
  Liu}, \bibinfo{person}{Yongming Huang}, \bibinfo{person}{Bjorn Ottersten},
  {and} \bibinfo{person}{Dusit Niyato}.} \bibinfo{year}{2024}\natexlab{}.
\newblock \showarticletitle{Random aggregate beamforming for over-the-air
  federated learning in large-scale networks}.
\newblock \bibinfo{journal}{\emph{arXiv preprint arXiv:2403.18946}}
  (\bibinfo{year}{2024}).
\newblock


\bibitem[Xu et~al\mbox{.}(2021)]%
        {Xu2021Learning}
\bibfield{author}{\bibinfo{person}{Chunmei Xu}, \bibinfo{person}{Shengheng
  Liu}, \bibinfo{person}{Zhaohui Yang}, \bibinfo{person}{Yongming Huang}, {and}
  \bibinfo{person}{Kai-Kit Wong}.} \bibinfo{year}{2021}\natexlab{}.
\newblock \showarticletitle{Learning rate optimization for federated learning
  exploiting over-the-air computation}.
\newblock \bibinfo{journal}{\emph{IEEE J. Sel. Areas Commun.}}
  \bibinfo{volume}{39}, \bibinfo{number}{12} (\bibinfo{date}{Dec.}
  \bibinfo{year}{2021}), \bibinfo{pages}{3742--3756}.
\newblock


\bibitem[Xu and Palanisamy(2021)]%
        {xu2021optimized}
\bibfield{author}{\bibinfo{person}{Jinlai Xu} {and} \bibinfo{person}{Balaji
  Palanisamy}.} \bibinfo{year}{2021}\natexlab{}.
\newblock \showarticletitle{Optimized contract-based model for resource
  allocation in federated geo-distributed clouds}.
\newblock \bibinfo{journal}{\emph{IEEE Trans. Serv. Comput.}}
  \bibinfo{volume}{14}, \bibinfo{number}{2} (\bibinfo{date}{Mar.}
  \bibinfo{year}{2021}), \bibinfo{pages}{530--543}.
\newblock


\bibitem[Yang et~al\mbox{.}(2020a)]%
        {Yang2020Via}
\bibfield{author}{\bibinfo{person}{Kai Yang}, \bibinfo{person}{Tao Jiang},
  \bibinfo{person}{Yuanming Shi}, {and} \bibinfo{person}{Zhi Ding}.}
  \bibinfo{year}{2020}\natexlab{a}.
\newblock \showarticletitle{Federated learning via over-the-air computation}.
\newblock \bibinfo{journal}{\emph{IEEE Trans. Wireless Commun.}}
  \bibinfo{volume}{19}, \bibinfo{number}{3} (\bibinfo{date}{Mar.}
  \bibinfo{year}{2020}), \bibinfo{pages}{2022--2035}.
\newblock


\bibitem[Yang et~al\mbox{.}(2020b)]%
        {Yang2020Federated}
\bibfield{author}{\bibinfo{person}{Kai Yang}, \bibinfo{person}{Yuanming Shi},
  \bibinfo{person}{Yong Zhou}, \bibinfo{person}{Zhanpeng Yang},
  \bibinfo{person}{Liqun Fu}, {and} \bibinfo{person}{Wei Chen}.}
  \bibinfo{year}{2020}\natexlab{b}.
\newblock \showarticletitle{Federated machine learning for intelligent IoT via
  reconfigurable intelligent surface}.
\newblock \bibinfo{journal}{\emph{IEEE Network}} \bibinfo{volume}{34},
  \bibinfo{number}{5} (\bibinfo{date}{Sept./Oct.} \bibinfo{year}{2020}),
  \bibinfo{pages}{16--22}.
\newblock


\bibitem[Zhang and Tao(2021)]%
        {9382094}
\bibfield{author}{\bibinfo{person}{Naifu Zhang} {and} \bibinfo{person}{Meixia
  Tao}.} \bibinfo{year}{2021}\natexlab{}.
\newblock \showarticletitle{Gradient statistics aware power control for
  over-the-air federated learning}.
\newblock \bibinfo{journal}{\emph{IEEE Trans. Wireless Commun.}}
  \bibinfo{volume}{20}, \bibinfo{number}{8} (\bibinfo{date}{Aug.}
  \bibinfo{year}{2021}), \bibinfo{pages}{5115--5128}.
\newblock


\bibitem[Zhang et~al\mbox{.}(2024)]%
        {Zhang2024Privacy}
\bibfield{author}{\bibinfo{person}{Qinnan Zhang}, \bibinfo{person}{Zehui
  Xiong}, \bibinfo{person}{Jianming Zhu}, \bibinfo{person}{Sheng Gao}, {and}
  \bibinfo{person}{Wanting Yang}.} \bibinfo{year}{2024}\natexlab{}.
\newblock \showarticletitle{A privacy-preserving auction mechanism for learning
  model as an NFT in blockchain-driven metaverse}.
\newblock \bibinfo{journal}{\emph{ACM Trans. Multimedia Comput. Commun. Appl.}}
  \bibinfo{volume}{20}, \bibinfo{number}{7} (\bibinfo{date}{Mar.}
  \bibinfo{year}{2024}), \bibinfo{numpages}{24}~pages.
\newblock
\showISSN{1551-6857}


\bibitem[Zhang et~al\mbox{.}(2025)]%
        {zhang2025joint}
\bibfield{author}{\bibinfo{person}{Tinghao Zhang}, \bibinfo{person}{Kwok-Yan
  Lam}, \bibinfo{person}{Jun Zhao}, {and} \bibinfo{person}{Jie Feng}.}
  \bibinfo{year}{2025}\natexlab{}.
\newblock \showarticletitle{Joint Device Scheduling and Bandwidth Allocation
  for Federated Learning Over Wireless Networks}.
\newblock \bibinfo{journal}{\emph{IEEE Trans. Wireless Commun.}}
  \bibinfo{volume}{24}, \bibinfo{number}{1} (\bibinfo{date}{Jan.}
  \bibinfo{year}{2025}), \bibinfo{pages}{3--18}.
\newblock


\bibitem[Zhao et~al\mbox{.}(2023)]%
        {Zhao2023Tensor}
\bibfield{author}{\bibinfo{person}{Ruonan Zhao}, \bibinfo{person}{Laurence~T.
  Yang}, \bibinfo{person}{Debin Liu}, \bibinfo{person}{Wanli Lu},
  \bibinfo{person}{Chenlu Zhu}, {and} \bibinfo{person}{Yiheng Ruan}.}
  \bibinfo{year}{2023}\natexlab{}.
\newblock \showarticletitle{Tensor-empowered LSTM for communication-efficient
  and privacy-enhanced cognitive federated learning in intelligent
  transportation systems}.
\newblock \bibinfo{journal}{\emph{ACM Trans. Multimedia Comput. Commun. Appl.}}
  \bibinfo{volume}{20}, \bibinfo{number}{2} (\bibinfo{date}{Sept.}
  \bibinfo{year}{2023}), \bibinfo{numpages}{21}~pages.
\newblock
\showISSN{1551-6857}


\bibitem[Zhao et~al\mbox{.}(2022)]%
        {Zhao2022Non-iid}
\bibfield{author}{\bibinfo{person}{Zhongyuan Zhao}, \bibinfo{person}{Chenyuan
  Feng}, \bibinfo{person}{Wei Hong}, \bibinfo{person}{Jiamo Jiang},
  \bibinfo{person}{Chao Jia}, \bibinfo{person}{Tony Q.~S. Quek}, {and}
  \bibinfo{person}{Mugen Peng}.} \bibinfo{year}{2022}\natexlab{}.
\newblock \showarticletitle{Federated learning with non-IID data in wireless
  networks}.
\newblock \bibinfo{journal}{\emph{IEEE Trans. Wireless Commun.}}
  \bibinfo{volume}{21}, \bibinfo{number}{3} (\bibinfo{date}{Sept.}
  \bibinfo{year}{2022}), \bibinfo{pages}{1927--1942}.
\newblock


\bibitem[Zhou et~al\mbox{.}(2022)]%
        {zhou2022communication}
\bibfield{author}{\bibinfo{person}{Yuhao Zhou}, \bibinfo{person}{Qing Ye},
  {and} \bibinfo{person}{Jiancheng Lv}.} \bibinfo{year}{2022}\natexlab{}.
\newblock \showarticletitle{Communication-efficient federated learning with
  compensated overlap-FedAvg}.
\newblock \bibinfo{journal}{\emph{IEEE Trans. Parallel Distrib. Syst.}}
  \bibinfo{volume}{33}, \bibinfo{number}{1} (\bibinfo{date}{Jan.}
  \bibinfo{year}{2022}), \bibinfo{pages}{192--205}.
\newblock


\bibitem[Zhu et~al\mbox{.}(2021a)]%
        {9322334}
\bibfield{author}{\bibinfo{person}{Guangxu Zhu}, \bibinfo{person}{Yuqing Du},
  \bibinfo{person}{Deniz Gündüz}, {and} \bibinfo{person}{Kaibin Huang}.}
  \bibinfo{year}{2021}\natexlab{a}.
\newblock \showarticletitle{One-bit over-the-air aggregation for
  communication-efficient federated edge learning: Design and convergence
  analysis}.
\newblock \bibinfo{journal}{\emph{IEEE Trans. Wireless Commun.}}
  \bibinfo{volume}{20}, \bibinfo{number}{3} (\bibinfo{date}{Aug.}
  \bibinfo{year}{2021}), \bibinfo{pages}{2120--2135}.
\newblock


\bibitem[{Zhu} et~al\mbox{.}(2020)]%
        {Broadband}
\bibfield{author}{\bibinfo{person}{G. {Zhu}}, \bibinfo{person}{Y. {Wang}},
  {and} \bibinfo{person}{K. {Huang}}.} \bibinfo{year}{2020}\natexlab{}.
\newblock \showarticletitle{Broadband analog aggregation for low-latency
  federated edge learning}.
\newblock \bibinfo{journal}{\emph{IEEE Trans. Wireless Commun.}}
  \bibinfo{volume}{19}, \bibinfo{number}{1} (\bibinfo{date}{Jan.}
  \bibinfo{year}{2020}), \bibinfo{pages}{491--506}.
\newblock


\bibitem[Zhu et~al\mbox{.}(2021b)]%
        {Zhu2021Over}
\bibfield{author}{\bibinfo{person}{Guangxu Zhu}, \bibinfo{person}{Jie Xu},
  \bibinfo{person}{Kaibin Huang}, {and} \bibinfo{person}{Shuguang Cui}.}
  \bibinfo{year}{2021}\natexlab{b}.
\newblock \showarticletitle{Over-the-air computing for wireless data
  aggregation in massive IoT}.
\newblock \bibinfo{journal}{\emph{IEEE Wireless Commun.}} \bibinfo{volume}{28},
  \bibinfo{number}{4} (\bibinfo{date}{Aug.} \bibinfo{year}{2021}),
  \bibinfo{pages}{57--65}.
\newblock


\bibitem[Şahin and Yang(2023)]%
        {Sahin2023OTA}
\bibfield{author}{\bibinfo{person}{Alphan Şahin} {and} \bibinfo{person}{Rui
  Yang}.} \bibinfo{year}{2023}\natexlab{}.
\newblock \showarticletitle{A survey on over-the-air computation}.
\newblock \bibinfo{journal}{\emph{IEEE Commun. Surv. Tutorials}}
  \bibinfo{volume}{25}, \bibinfo{number}{3} (\bibinfo{date}{Thirdquarter}
  \bibinfo{year}{2023}), \bibinfo{pages}{1877--1908}.
\newblock


\end{thebibliography}

\appendix
\section{NP-Hardness Proof of Problem (\ref{eq:SIMO2})}
\label{hardness}

We establish the NP-hardness proof of Problem \eqref{eq:SIMO2} via a reduction from the maximum clique problem, which is known to be NP-hard \cite{gutin2004independent}.
Firstly, we present the definition of the maximum clique problem.
\begin{definition}
	Maximum Clique Problem:
	Given an undirected graph $\mathcal{G}=(\mathcal{V}, \mathcal{E})$, where $\mathcal{V}$ and $\mathcal{E}$ represent the set of vertices and edges. Find the largest subset of vertices $\mathcal{S}\subseteq \mathcal{V}$ such that every pair of vertices in $\mathcal{S}$ is connected by an edge, (i.e., $\mathcal{S}$ forms a clique).
\end{definition}

Then, we construct a polynomial-time reduction from Problem \eqref{eq:SIMO2} to the maximum clique problem.
\begin{proof}
We first project the users and their mutual relationships into graph $\mathcal{G}$.
Let $\mathcal{K}= \left\{\mathbf{h}_1, \mathbf{h}_2, \ldots, \mathbf{h}_K\right\}$ denotes the set of vertices of $\mathcal{G}$.
The edges $(v_i, v_j) \in \mathcal{E}$ can be defined based on the inner product of $\mathbf{h}_i$ and $\mathbf{h}_j$. Specifically, there is an edge between $v_i$ and $v_j$ if and only if $\Vert\mathbf{h}_i^\mathsf{H}\mathbf{h}_j\Vert$ exceeds a given threshold $\tau >0$.

According to Theorem \ref{thmtheorem1} and Theorem \ref{thmtheorem2}, the problem \eqref{eq:SIMO2} can be transformed to select a user subset $\mathcal{S}$ from $\mathcal{K}$ to
maximize the minimum projection value $\Vert \mathbf{h}_i^\mathsf{H}\mathbf{h}_j\Vert$ across the selected subset, i.e., $i, j \in \mathcal{S}$.
This optimization problem is equivalent to select the maximum clique $\mathcal{S}$ from $\mathcal{K}$. This guarantees that the magnitude of inner product of any pair of users exceeds $\tau$.
However, the constraint $\vert\mathcal{S}\vert = S$ is not yet satisfied. Hence, the threshold $\tau$ is adjusted through an iterative update process.
Specifically, if the number of vertices in the clique exceeds $S$, then increase the threshold $\tau$; otherwise, $\tau$ is reduced until the selected clique $\mathcal{S}$ satisfies $\vert\mathcal{S}\vert=S$.
In summary, since finding the maximum clique in each iteration is already an NP-hard problem, the overall problem is at least as hard, if not harder.
\end{proof}

\section{Proof of Theorem \ref{the:bound}}
\label{proof_bound}

\begin{proof}
For an arbitrary user $k$, we first define $\mathbf{H}_k = \mathbf{h}_k\mathbf{h}_k^\mathsf{H}$. For notational simplicity, the subscript $k$ is omitted and $\mathbf{H}_k$ is written as $\mathbf{H}$. If $f(\mathbf{m})=\mathbf{m}^\mathsf{H}\mathbf{H}\mathbf{m}$ satisfies the PL inequality, there exists a constant $\eta>0$ such that the following inequality holds
	\begin{equation}
		\vert f(\mathbf{m}) - f(\bar{\mathbf{m}})\vert \leq \frac{1}{2\eta}\Vert \xi \Vert^2 , \forall \xi \in \operatorname{co}(\nabla f(\mathbf{m})),
	\end{equation}
	where $\bar{\mathbf{m}}$ is a critical point, that is, $\nabla f(\bar{\mathbf{m}}) = \mathbf{H}\bar{\mathbf{m}}=0$. Besides,  $\operatorname{co}(\nabla f(\mathbf{m}))$ is the convex hull of $\nabla f(\mathbf{m})$.
	Due to the smoothness of $f(\mathbf{m})$, it can be represented as:
	\begin{equation}
		\vert f(\mathbf{m}) - f(\bar{\mathbf{m}})\vert \leq \frac{1}{2\eta}\Vert \mathbf{H}\mathbf{m} \Vert^2 .
	\end{equation}
	This equation imposes an upper bound on the gap between the derived suboptimal solution and the optimal solution.
	
	Then, the convergence rate is derived through performance estimation \cite{abbaszadehpeivasti2024rate}.
	If $f_1$ and $f_2$ are strictly differentiable, we have $\operatorname{co}(\nabla f(\mathbf{m}))=\nabla f_1 - \nabla f_2$.
	Consequently, denote $f^n$ is the $n$-th iteration value of $f$ and function $g=\nabla f$, the performance can be estimated with the PL inequality as
	\begin{equation}
	\begin{aligned}
		&\max \; \frac{(f_1^2 - f_2^2) - f^*}{(f_1^1 - f_2^1) - f^*} \\
		&\text{s.t.} \quad
		\frac{1}{2(1-\frac{\mu_1}{L_1})} \left( \frac{1}{L_1} \left\| g_1^i - g_1^j \right\|^2 + \mu_1 \left\| x^i - x^j \right\|^2
		- \frac{2\mu_1}{L_1} \left\langle g_1^j - g_1^i, x^j - x^i \right\rangle \right)\\
		&\qquad 
\leq f_1^i - f_1^j - \left\langle g_1^j, x^i - x^j \right\rangle \quad i, j \in \{1,2\} \\
		&\qquad \frac{1}{2(1-\frac{\mu_2}{L_2})} \left( \frac{1}{L_2} \left\| g_2^i - g_2^j \right\|^2 + \mu_2 \left\| x^i - x^j \right\|^2
		- \frac{2\mu_2}{L_2} \left\langle g_2^j - g_2^i, x^j - x^i \right\rangle \right)\\
		&\qquad \leq f_2^i - f_2^j - \left\langle g_2^j, x^i - x^j \right\rangle \quad i, j \in \{1, 2\} \\
		&\qquad f_1^k - f_2^k \geq f^* \quad k \in \{1, 2\} \\
		&\qquad g_2^1 = g_1^2 \\
		&\qquad \left( f_1^k - f_2^k \right) - f^* \leq \frac{1}{2\eta} \left\| g_1^k - g_2^k \right\|^2, \quad k \in \{1, 2\},
	\end{aligned}
	\label{eq:rate1}
	\end{equation}
	where $ f^*=f(\bar{\mathbf{m}})$, $\mu$ represents the strong convexity parameter and $L$ denotes the smoothness parameter.
	By aggregating the constraints in \eqref{eq:rate1}, we can prove that the DC algorithm convergence at a linear rate.
	Without loss of generality, we assume that $\mu_1=\mu_2=0$ and $f^*=0$. We can obtain that
	\begin{equation}
		\label{eq:rate2}
		\begin{aligned}
			& \left(f_1^2-f_2^2\right)-f^*-\left(\frac{1-\frac{\eta}{L_1}}{1+\frac{\eta}{L_2}}\right)\left(\left(f_1^1-f_2^1\right)-f^*\right)+\left(\frac{1}{1+\frac{\eta}{L_2}}\right) \\
			& \times\left(f_1^1-f_1^2-\left\langle g_1^2,x^1-x^2\right\rangle-\frac{1}{2L_1}\left\Vert g_1^1-g_1^2\right\Vert^2\right) \\
			& +\left(\frac{1}{1+\frac{\eta}{L_2}}\right)\left(f_2^2-f_2^1-\langle g_1^2,x^2-x^1\rangle-\frac{1}{2L_2}\left\Vert g_1^2-g_2^2\right\Vert^2\right)+\left(\frac{\frac{\eta}{L_1}}{1+\frac{\eta}{L_2}}\right) \\
			& \times\left(\frac{1}{2\eta}\left\|g_1^1-g_1^2\right\|^2-f_1^1+f_2^1\right)+\left(\frac{\frac{\eta}{L_2}}{1+\frac{\eta}{L_2}}\right)\left(\frac{1}{2\eta}\left\|g_1^2-g_2^2\right\|^2-f_1^2+f_2^2\right)=0.
	\end{aligned}\end{equation}
	Since all the multipliers in \eqref{eq:rate2} are non-negative, for any feasible solution, it holds that
	\begin{equation}
		f(x^2)-f^*-\left(\frac{1-\frac{\eta}{L_1}}{1+\frac{\eta}{L_2}}\right)\left(f(x^1)-f^*\right)\leq0,
	\end{equation}
	which establishes the linear convergence of the DC method.
	\end{proof}

\end{document}